\newtheorem{theorem}{Theorem}[section]
\newtheorem{lemma}[theorem]{Lemma}
\newenvironment{proof}[1][Proof]{\begin{trivlist}
\item[\hskip \labelsep {\bfseries #1}]}{\end{trivlist}}
\newcommand{\qed}{\nobreak \ifvmode \relax \else
\ifdim\lastskip<1.5em \hskip-\lastskip
\hskip1.5em plus0em minus0.5em \fi \nobreak
\vrule height0.75em width0.5em depth0.25em\fi}
\begin{document}

\title{A Variational Principle for Improving 2D Triangle Meshes based on Hyperbolic Volume}
\author{ Jian Sun 
\thanks{Tsinghua University, Beijing, China \texttt{jsun@math.tsinghua.edu.cn} } 
\and
Wei Chen 
\thanks{Kunming University of Science and Technology, Kunming, China \texttt{chenwei19861027@gmail.com}} 
\and
Junhui Deng 
\thanks{Tsinghua University, Beijing, China \texttt{deng@tsinghua.edu.cn}} 
\and
Jie Gao 
\thanks{Stony Brook University, New York, US \texttt{jgao@cs.sunysb.edu}} 
\and
Xianfeng Gu 
\thanks{Stony Brook University, New York, US \texttt{gu@cs.sunysb.edu}} 
\and
Feng Luo 
\thanks {Rutgers University, New Jersey, US \texttt{fluo@math.rutgers.edu}}
}

%
%
\maketitle

\begin{abstract}
In this paper, we consider the problem of improving 2D triangle meshes tessellating planar regions. We propose a new variational principle for improving 2D triangle
meshes where the energy functional is a convex function over the angle structures whose maximizer is unique and consists only of equilateral triangles. This energy functional is related to hyperbolic volume of ideal 3-simplex. Even with extra constraints on the angles for embedding the mesh into the plane and preserving the boundary, the energy functional remains well-behaved. We devise an efficient algorithm for maximizing the energy functional over these extra constraints. We apply our algorithm to various datasets and compare its performance with that of CVT. The experimental results show that our algorithm produces the meshes with both the angles and the aspect ratios of triangles lying in tighter intervals.
\end{abstract}

\section{Introduction}
\label{sec:intro}
In this paper, we consider the problem of improving 2D triangle meshes tessellating
planar regions. The applications in scientific computing require quality meshes. The quality 
here refers to the shape and size of the elements: a triangle is of good shape if it is close 
to the equilateral triangle, i.e., its inner angles are close to $\pi/3$. 
The most popular approach for generating quality meshes is 
Delaunay refinement~\cite{Chew:1987:CDT}. Delaunay refinement algorithms commonly perform one local
change at a time, until the criteria of the shape and size of elements is satisfied. They are greedy
approaches which may produce bad shaped triangles, especially near the boundary. Quite a few methods 
have been proposed to deal with this issue. Among them, centroidal Voronoi tessellation (CVT) is 
widely used where the vertices are iteratively moved to the barycenters of the corresponding Voronoi 
cells. There is a variational principle associated with this so-call Lloyd iteration where the 
energy functional measures the difference between the site points and the barycenters of the corresponding 
Voronoi cells~\cite{Du:1999:CVT}.

This paper proposes a new variational principle for improving 2D triangle meshes where
the energy functional is a convex function over the angle structures whose maximizer is 
unique and consists only of equilateral triangles. This energy functional is related to 
hyperbolic volume of ideal 3-simplex. Of course, one needs to impose the
extra constraints on the angles to embed the mesh into the plane and preserve the boundary. 
Nevertheless, we show the energy functional is still well-behaved even with these
constraints. 
The space of angle structures is much bigger than the space of coordinates,
which provides more freedom for algorithms to search for maximizer and thus 
find better (local) maximizer and generate better meshes, as demonstrated in Section~\ref{sec:results}. 
We devise an algorithm based on interior-point method for maximizing the energy 
functional over extra constraints. We apply our algorithm to various datasets and compare
its performance with that of CVT. The experimental results show that our algorithm
produces the meshes with both the angles and the aspect ratios of triangles lying in tighter
intervals.


\vspace{2mm}
\noindent{\bf Previous work:~}
There are great amounts of research work on quality mesh generation and many meshing strategies have
been proposed and studied. Here we summarize those most relevant to our work.
Readers are referred to~\cite{del} and the references therein for more related work on quality mesh generation. 
The most popular approach is Delaunay refinement, which iteratively 
inserts Steiner points to improve the quality of the mesh until the initial criteria is satisfied 
for each triangle. Delaunay refinement approach is pioneered by Chew~\cite{Chew:1987:CDT}, and later improved and extended by 
many others\cite{Ruppert:1995:DRA, Shewchuk:2002:DRA}. Shewchuk~\cite{Shewchuk:2002:DRA} shows that it terminates with a finite number of Steiner points and 
with bounds on the angles. Delaunay refinement can be improved either by carefully designing the order 
of inserting Steiner points or choosing the positions of Steiner points other than the 
circumcenters of triangles~\cite{Ungor:2009:ONT}. However, it remains a greedy approach, which may make globally 
bad decisions which are not reversible. To address this issue, variational approaches are proposed where 
an energy functional is chosen so that the low levels of this energy correspond to the meshes with good quality. 
The widely used energy functional is the one used in CVT which sums the differences between the site points
and the barycenters of the corresponding Voronoi cells~\cite{Du:1999:CVT}. This is based on the observation that in 2D, 
evenly distributed points lead to well-shaped triangles in Delaunay triangulation~\cite{eppstein:2001}. Du et al.~\cite{Du:1999:CVT} proposed
the Lloyd iteration to transforms an initial ordinary Voronoi diagram into a centroidal Voronoi diagram. 
Finally, Tournois et al. \cite{Tournois:2009:IDR} proposed to interleave Delaunay refinement and CVT for generating and improving
2D triangle meshes. 

\section{The Energy Functional}
\label{sec:energy}
In this section, we describe the energy functional and discuss its properties. 
We start with a single triangle $t$. Let $\alpha, \beta, \gamma$ be the inner angles
of $t$. Assign $t$ the following energy.
\begin{equation}
E(t) = \Lambda(\alpha) + \Lambda(\beta) + \Lambda(\gamma) 
\end{equation}
where $\Lambda$ is Lobachevsky function: 
\begin{equation}
\Lambda(x) = -\int_0^x \ln |2\sin(t)|dt. 
\label{eqn:lobachevsky}
\end{equation}
Lobachevsky function is continuous odd and periodic of period $\pi$. 
Figure~\ref{fig:lob_fun} shows the graph of $\Lambda$ over $[0, \pi]$. 
See~\cite{Mil} for more properties of Lobachevsky function. 

\begin{figure}[!t]
\begin{center}
\begin{tabular}{c}
\includegraphics[width=0.6\textwidth]{./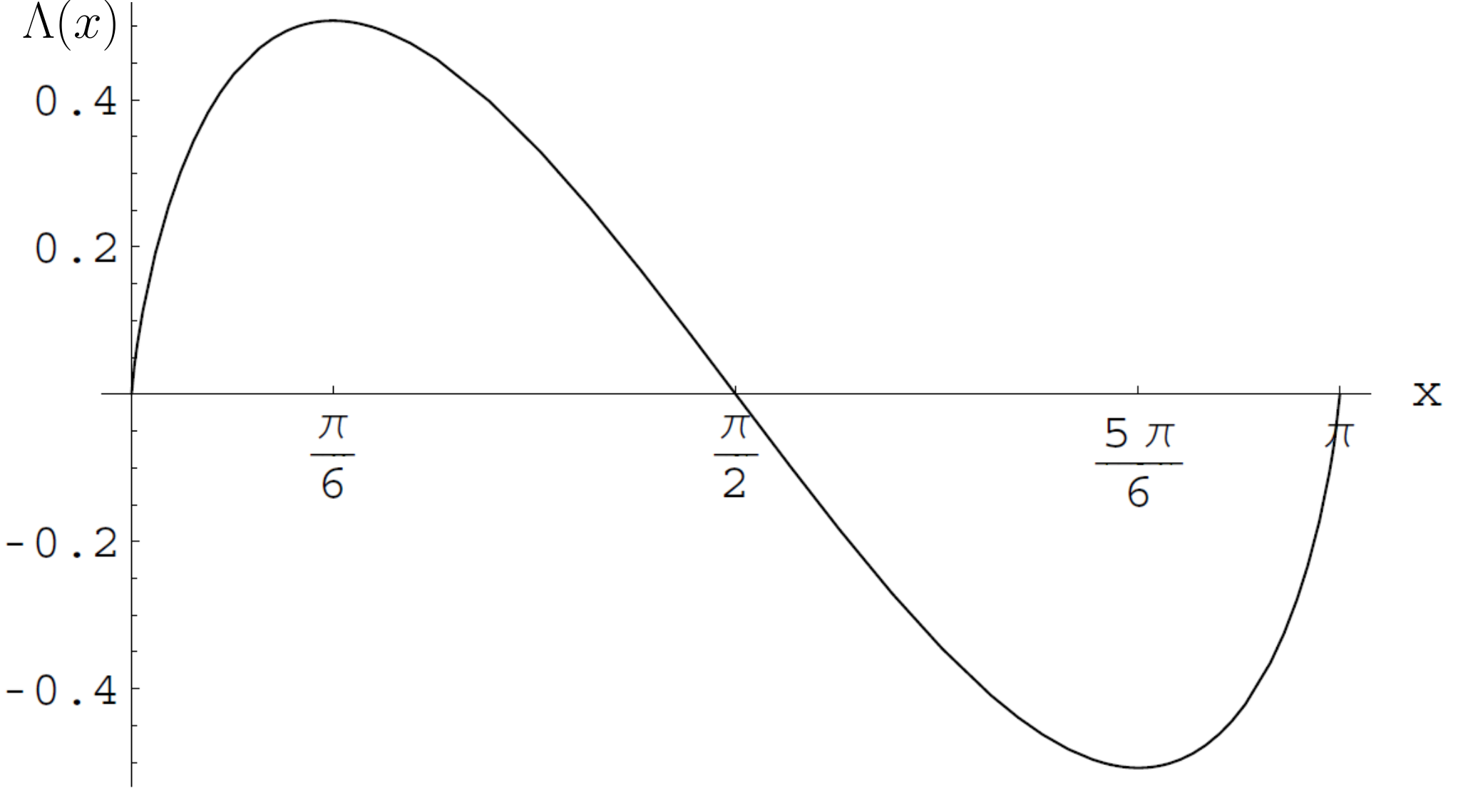}
\end{tabular}
\end{center}
\vspace{-4mm}
\caption{The graph of Lobachevsky function.}
\label{fig:lob_fun}
\end{figure}

This energy assigned to $t$ is in fact the volume of an ideal hyperbolic 3-simplex. Consider
the upper half space: $\mathbb{H}^3 = \{(x, y, z) \in  \mathbb{R}^3| z > 0\}$ with the hyperbolic
metric $ds^2= \frac{dx^2 + dy^2 + dz^2}{z^2}$. Place the triangle $t$ on the plane $z=0$, and add
the fourth vertex $l$ at infinity. Then all four vertices $i, j, k, l$ are at infinity, and 
thus form an ideal hyperbolic 3-simples, denoted $\Delta=ijkl$. Place the vertex $l$ at a position
so that the dihedral angles along three edges meeting at $l$ are the inner angles of the triangle $t$, 
as shown in Figure~\ref{fig:ideal_simplex}. The fact that $E(t)$ is the volume of the ideal 3-simple $\sigma$ follows
from the following lemma.

\begin{lemma}[e.g., \cite{Mil}]
Consider an ideal hyperbolic 3-simplex, that is a simplex $\Delta$ with all four vertices 
at infinity. If $\alpha$, $\beta$, $\gamma$ are the dihedral angles along three 
edges meeting at a common vertex, then $\alpha + \beta + \gamma = \pi$, and 
\begin{equation}
\text{volume}(\Delta) = \Lambda(\alpha) + \Lambda(\beta) + \Lambda(\gamma)
\label{eqn:volume}
\end{equation}
\end{lemma}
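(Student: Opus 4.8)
The plan is to reproduce Milnor's computation. First I would pick a convenient model for $\Delta$: work in the upper half-space $\mathbb{H}^3$ and send one ideal vertex, say $l$, to $\infty$. The three remaining ideal vertices then lie on the boundary plane $z=0$ and span a Euclidean triangle $T$; the three faces of $\Delta$ through $l$ are the vertical half-planes erected over the sides of $T$, and the fourth face lies on the hemisphere whose equatorial circle is the circumscribed circle of $T$. Slicing by a horosphere $\{z=c\}$ and using that the dihedral angle between two vertical half-planes equals the Euclidean angle between the lines they stand on, I get that the three dihedral angles at $l$ are exactly the inner angles of $T$; in particular they sum to $\pi$, which is the first assertion. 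Because the dilations $(x,y,z)\mapsto(\lambda x,\lambda y,\lambda z)$ and the horizontal translations are hyperbolic isometries, I may further normalize so that the circumcenter of $T$ is the origin and the circumradius equals $1$; then the fourth face is the unit hemisphere $x^2+y^2+z^2=1$, and $\Delta=\{(x,y,z):(x,y)\in T,\ z\ge\sqrt{1-x^2-y^2}\}$.

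Next I would set up and reduce the volume integral. Integrating $z^{-3}$ in $z$ first gives
\[
\mathrm{volume}(\Delta)=\iint_{T}\!\!\int_{\sqrt{1-x^2-y^2}}^{\infty}\frac{dz\,dx\,dy}{z^{3}}=\frac12\iint_{T}\frac{dx\,dy}{1-x^2-y^2}.
\]
Assuming first that $T$ is acute, so its circumcenter lies inside $T$, I would cut $T$ into six right triangles by dropping perpendiculars from the origin to the three sides. By the inscribed-angle theorem the side opposite the vertex of angle $\alpha$ subtends central angle $2\alpha$, so these six pieces occur in three congruent pairs whose apex angles at the origin are $\alpha,\alpha,\beta,\beta,\gamma,\gamma$. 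In polar coordinates $(r,\theta)$ about the origin, with $\theta$ measured from the foot of a perpendicular of length $\cos\theta_0$, a right triangle with apex angle $\theta_0$ is $\{0\le\theta\le\theta_0,\ 0\le r\le\cos\theta_0/\cos\theta\}$, and using $\cos^2\theta-\cos^2\theta_0=\sin(\theta_0+\theta)\sin(\theta_0-\theta)$ its contribution to the integral above is
\[
\frac12\int_0^{\theta_0}\!\!\int_0^{\cos\theta_0/\cos\theta}\frac{r\,dr\,d\theta}{1-r^{2}}=-\frac14\int_0^{\theta_0}\ln\!\Big(\tfrac{\sin(\theta_0+\theta)\sin(\theta_0-\theta)}{\cos^{2}\theta}\Big)d\theta.
\]

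The final step is to identify this with $\tfrac12\Lambda(\theta_0)$. Writing $\Lambda'(x)=-\ln|2\sin x|$ from (\ref{eqn:lobachevsky}) and noting that the three $\ln 2$ terms cancel since their coefficients $1,1,-2$ sum to zero, elementary integration turns the right-hand side into $\tfrac14\big(\Lambda(2\theta_0)+2\Lambda(\tfrac\pi2-\theta_0)\big)$, using $\Lambda(0)=\Lambda(\tfrac\pi2)=0$; the duplication identity $\Lambda(2\theta_0)=2\Lambda(\theta_0)-2\Lambda(\tfrac\pi2-\theta_0)$, a consequence of the oddness and $\pi$-periodicity of $\Lambda$, then collapses this to $\tfrac12\Lambda(\theta_0)$. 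Summing over the six pieces gives $\mathrm{volume}(\Delta)=2\big(\tfrac12\Lambda(\alpha)+\tfrac12\Lambda(\beta)+\tfrac12\Lambda(\gamma)\big)=\Lambda(\alpha)+\Lambda(\beta)+\Lambda(\gamma)$. The rest is bookkeeping: I must check that the improper integrals converge (the only singularities are along the fourth face, where $1-r^2\to0$ integrably, and near the ideal vertices of $T$), and I must handle the obtuse case, where the circumcenter falls outside $T$; there the six-triangle decomposition is read with signs — one congruent pair is subtracted rather than added — but the per-triangle value remains $\pm\tfrac12\Lambda(\theta_0)$, so the total is unchanged. Once the formula is guessed one could instead confirm it via the Schläfli variational formula for hyperbolic volume, but the direct integration above is self-contained. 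The main obstacle, such as it is, is keeping the geometry of the decomposition and the signs straight across the acute and obtuse cases.
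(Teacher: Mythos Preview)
Your proof is correct and is precisely the classical computation from Milnor's notes that the paper cites. The paper itself does not supply a proof of this lemma: it is stated with the attribution ``e.g., \cite{Mil}'' and then used as input, so there is no in-paper argument to compare against. Your write-up faithfully reproduces the referenced approach --- upper half-space model, normalization of the circumcircle, decomposition of the base triangle into six right triangles via the circumcenter, reduction of each piece to $\tfrac12\Lambda(\theta_0)$ through the duplication identity --- and your handling of the obtuse case by signed contributions and of the convergence at the ideal vertices is the standard bookkeeping.
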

Remark that it does not matter which particular vertex we choose, since it follows easily that the 
dihedral angles along the opposite edges of $\Delta$ are equal so that we have the same three 
dihedral angles $\alpha$, $\beta$, $\gamma$ incident to any vertex. 

There are many nice properties of the energy function $E(t)$. Here we state two of them which
are most relevant to our setting.  The following lemma says that $E(t)$ reaches maximum when $t$ is equilateral. 
\begin{lemma}[e.g., \cite{Mil}]
The volume of a hyperbolic 3-simplex reaches the maximum $3\Lambda(\pi/3)$ when $\alpha = \beta =\gamma = \pi / 3$.  
\label{lem:maximum-t}
\end{lemma}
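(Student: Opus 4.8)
The plan is to treat this as a constrained optimization problem. Set $V(\alpha,\beta,\gamma)=\Lambda(\alpha)+\Lambda(\beta)+\Lambda(\gamma)$ on the closed triangle $T=\{(\alpha,\beta,\gamma):\alpha,\beta,\gamma\ge 0,\ \alpha+\beta+\gamma=\pi\}$, which is the natural domain of angles by the preceding lemma. Since $\Lambda$ is continuous on all of $\mathbb{R}$ and $T$ is compact, $V$ attains its maximum on $T$; the goal is to locate that maximizer and evaluate $V$ there. This is essentially Milnor's argument, cf.\ \cite{Mil}.

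The first step is to dispose of the boundary. On the face $\gamma=0$ one has $\beta=\pi-\alpha$, and using that $\Lambda$ is odd and $\pi$-periodic, $\Lambda(\pi-\alpha)=\Lambda(-\alpha)=-\Lambda(\alpha)$, so $V=\Lambda(\alpha)-\Lambda(\alpha)+\Lambda(0)=0$; by symmetry $V\equiv 0$ on $\partial T$. On the other hand $V(\pi/3,\pi/3,\pi/3)=3\Lambda(\pi/3)>0$: differentiating the defining integral gives $\Lambda'(x)=-\ln|2\sin x|$ and $\Lambda''(x)=-\cot x<0$ on $(0,\pi/2)$, so $\Lambda$ is concave there, and since $\Lambda(0)=0$ and $\Lambda(\pi/2)=-\Lambda(\pi/2)=0$ (again by oddness and periodicity), $\Lambda$ is strictly positive on $(0,\pi/2)$, in particular at $\pi/3$. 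Hence the maximum of $V$ is attained at an interior point of $T$.

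The second step is to apply Lagrange multipliers (equivalently, eliminate $\gamma=\pi-\alpha-\beta$ and set the gradient to zero). Using $\Lambda'(x)=-\ln|2\sin x|$, an interior critical point satisfies $\Lambda'(\alpha)=\Lambda'(\beta)=\Lambda'(\gamma)$, i.e.\ $\sin\alpha=\sin\beta=\sin\gamma$ with all angles in $(0,\pi)$. Then $\sin\alpha=\sin\beta$ forces $\alpha=\beta$ or $\alpha+\beta=\pi$; the latter gives $\gamma=0$, excluded, so $\alpha=\beta$, and likewise $\beta=\gamma$. Thus $(\pi/3,\pi/3,\pi/3)$ is the unique interior critical point, so it is the maximizer, and $\max V=3\Lambda(\pi/3)$. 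I would optionally confirm nondegeneracy by computing the Hessian of $(\alpha,\beta)\mapsto\Lambda(\alpha)+\Lambda(\beta)+\Lambda(\pi-\alpha-\beta)$ at $(\pi/3,\pi/3)$, which is $\begin{pmatrix} -2/\sqrt3 & -1/\sqrt3 \\ -1/\sqrt3 & -2/\sqrt3 \end{pmatrix}$, negative definite.

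I do not expect a serious obstacle; the only mildly delicate points are (i) justifying that $V$ attains its maximum on $T$, which is immediate from continuity of $\Lambda$ and compactness of $T$, and (ii) ruling out the boundary, handled by the vanishing computation above. The one place requiring a little care if one wants to avoid compactness altogether is strict concavity of $V$ on $\mathrm{int}\,T$: one angle may exceed $\pi/2$, making the corresponding $-\cot$ term in the Hessian positive, so negative-definiteness of the restricted form $-\cot\alpha\,u^2-\cot\beta\,v^2-\cot\gamma\,w^2$ on $\{u+v+w=0\}$ would have to be checked directly; the compactness route above sidesteps this entirely.
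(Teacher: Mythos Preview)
Your argument is correct and is essentially Milnor's standard proof. The paper itself does not supply a proof of this lemma; it simply cites \cite{Mil}. So there is nothing to compare against in the paper proper, and your write-up is exactly the argument the citation points to: continuity and compactness give a maximizer on the closed simplex, oddness and $\pi$-periodicity of $\Lambda$ force $V\equiv 0$ on $\partial T$, positivity of $\Lambda(\pi/3)$ pushes the maximum into the interior, and the Lagrange condition $\sin\alpha=\sin\beta=\sin\gamma$ singles out the equilateral point. The optional Hessian check is fine and in fact matches (after specialization) the Hessian the paper computes in the following Lemma~\ref{lem:concave-t}.

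One small remark: your last paragraph correctly flags that global strict concavity of $V$ on $\mathrm{int}\,T$ is not quite immediate from $\Lambda''=-\cot$ alone, since one angle may exceed $\pi/2$. If you ever want to dispense with the compactness argument, note that the paper's Lemma~\ref{lem:concave-t} (due to Rivin) does exactly this: it verifies directly that the $2\times 2$ Hessian of $(\alpha,\beta)\mapsto \Lambda(\alpha)+\Lambda(\beta)+\Lambda(\pi-\alpha-\beta)$ is negative definite on the whole open triangle, which would give you uniqueness of the interior critical point without appealing to the boundary at all. Either route is valid; yours is self-contained and slightly more elementary.
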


\begin{figure}[!t]
\begin{center}
\begin{tabular}{c}
\includegraphics[width=0.6\textwidth]{./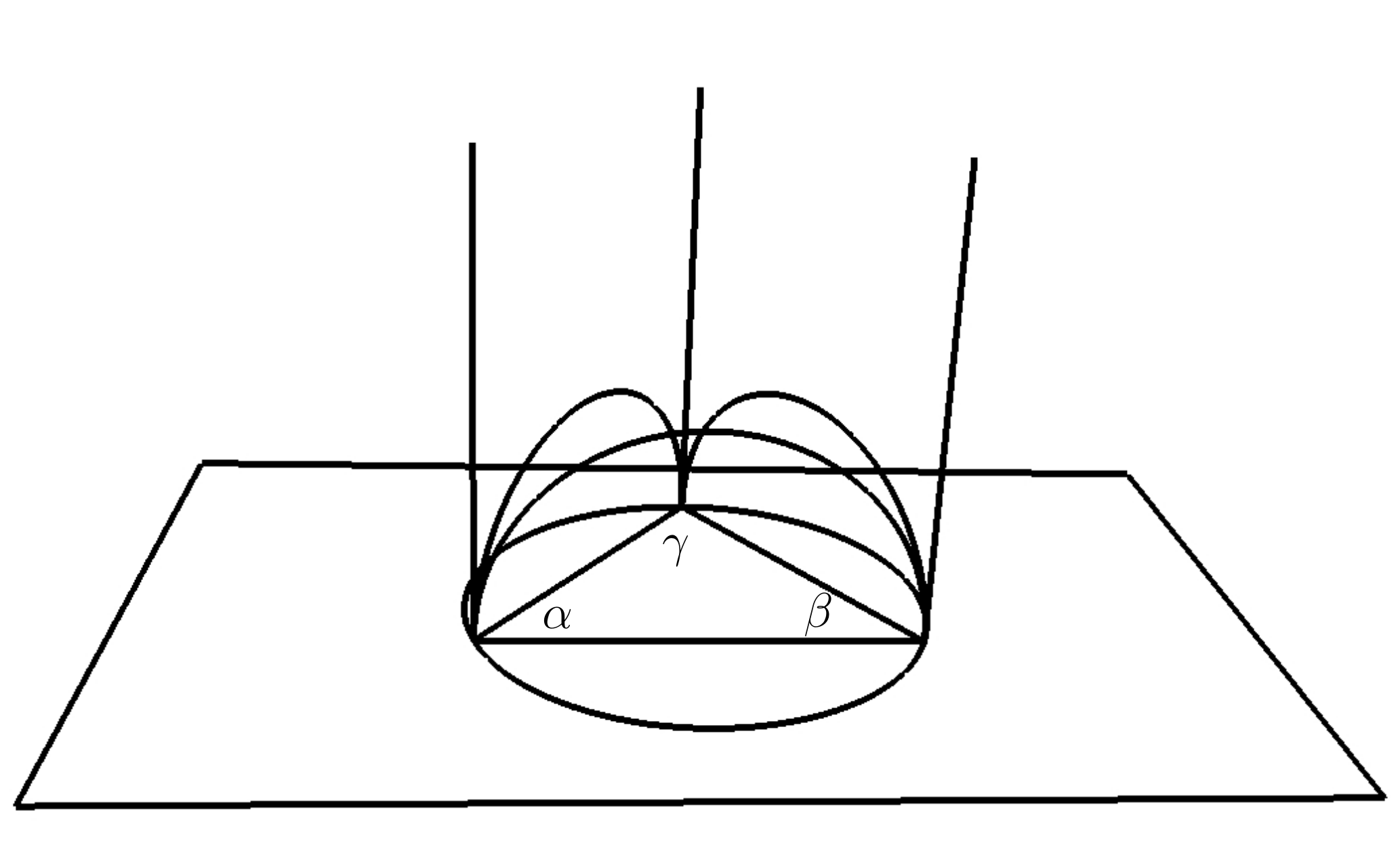}
\end{tabular}
\end{center}
\vspace{-4mm}
\caption{Ideal 3-simplex.}
\label{fig:ideal_simplex}
\end{figure}

Since $\alpha + \beta + \gamma = \pi$, 
$E$ is a function of two inner angles $\alpha$ and $\beta$ which parametrize the space of all 
Euclidean triangles up to similar transformations. The following lemma tells $E$ 
is a strictly concave function over the space of Euclidean triangles up to similar
transformations. 
\begin{lemma}[\cite{r-esssh-94}]
The Hessian of $E$ is 
\begin{equation}
H(E)(\alpha, \beta) = \begin{bmatrix} \frac{-\sin\beta}{\sin(\alpha + \beta)\sin(\alpha)} &  \frac{\cos(\alpha+\beta}{\sin(\alpha + \beta)} \\ 
												  \frac{\cos(\alpha+\beta}{\sin(\alpha + \beta)} &  \frac{-\sin\alpha}{\sin(\alpha + \beta)\sin(\beta)}
													 \end{bmatrix}
\end{equation}
and is negative definite. 
\label{lem:concave-t}
\end{lemma}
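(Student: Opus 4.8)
The plan is to obtain the Hessian by differentiating $E$ directly and then to read off negative definiteness from the standard criterion for $2\times 2$ symmetric matrices. Using the constraint $\gamma = \pi - \alpha - \beta$ I would regard $E$ as the function $(\alpha,\beta)\mapsto \Lambda(\alpha)+\Lambda(\beta)+\Lambda(\pi-\alpha-\beta)$ on the open region $\{\alpha>0,\ \beta>0,\ \alpha+\beta<\pi\}$ of similarity classes of Euclidean triangles. From the definition (\ref{eqn:lobachevsky}) one has $\Lambda'(x)=-\ln|2\sin x|$ and hence $\Lambda''(x)=-\cot x$. The chain rule, together with $\partial\gamma/\partial\alpha=\partial\gamma/\partial\beta=-1$, gives $E_\alpha=\Lambda'(\alpha)-\Lambda'(\gamma)$ and $E_\beta=\Lambda'(\beta)-\Lambda'(\gamma)$, and differentiating once more,
\[
E_{\alpha\alpha}=-\cot\alpha-\cot\gamma,\qquad E_{\beta\beta}=-\cot\beta-\cot\gamma,\qquad E_{\alpha\beta}=-\cot\gamma .
\]

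Next I would convert these into the stated form. Since $\gamma=\pi-\alpha-\beta$ we have $\sin\gamma=\sin(\alpha+\beta)$ and $\cos\gamma=-\cos(\alpha+\beta)$, which immediately gives $E_{\alpha\beta}=-\cot\gamma=\cos(\alpha+\beta)/\sin(\alpha+\beta)$. For the diagonal entries, putting the two cotangents over a common denominator and using $\cos\alpha\sin\gamma+\sin\alpha\cos\gamma=\sin(\alpha+\gamma)=\sin\beta$ yields $E_{\alpha\alpha}=-\sin\beta/\bigl(\sin(\alpha+\beta)\sin\alpha\bigr)$, and symmetrically for $E_{\beta\beta}$. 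This is exactly the matrix $H(E)(\alpha,\beta)$ in the statement.

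Finally, for negative definiteness I would use that a symmetric matrix $\left[\begin{smallmatrix} a & b\\ b & d\end{smallmatrix}\right]$ is negative definite precisely when $a<0$ and $ad-b^2>0$. On the domain all of $\alpha,\beta,\gamma$, hence $\sin\alpha,\sin\beta,\sin(\alpha+\beta)$, lie in $(0,\infty)$, so the $(1,1)$ entry is strictly negative. For the determinant, a short computation gives $ad=1/\sin^2(\alpha+\beta)$ while $b^2=\cos^2(\alpha+\beta)/\sin^2(\alpha+\beta)$, so $ad-b^2=\bigl(1-\cos^2(\alpha+\beta)\bigr)/\sin^2(\alpha+\beta)=1$, which is positive everywhere. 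Hence $H(E)$ is negative definite on the whole domain, and strict concavity of $E$ follows.

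There is no substantial obstacle here; the argument is a direct computation in the spirit of Rivin's work. The only points needing care are the sign produced by $\partial\gamma/\partial\alpha=-1$ in the chain rule and the elementary trigonometric identities used to recast the cotangents, and the whole conclusion rests on the clean fact that the Hessian determinant is identically equal to $1$.
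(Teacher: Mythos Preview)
Your derivation of the Hessian and your proof of negative definiteness are both correct. The paper takes a different path for the second part: rather than invoking the principal-minor (Sylvester) criterion, it expands the quadratic form $v^{T}H(E)v$ for a unit vector $v=(x,y)^{T}$ directly, bounds the diagonal contribution by $-2xy/\sin(\alpha+\beta)$ via an AM--GM estimate on $\frac{\sin\beta}{\sin\alpha}x^{2}+\frac{\sin\alpha}{\sin\beta}y^{2}\ge 2xy$, and then uses $|\cos(\alpha+\beta)|<1$ to conclude. Your route has the advantage of yielding the clean identity $\det H(E)\equiv 1$, from which strict negative definiteness follows at once with no case analysis on the sign of $xy$; the paper's quadratic-form estimate avoids computing the determinant but is a little more delicate with signs. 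Both arguments are short and elementary.
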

\begin{proof}
Based on the equations~(\ref{eqn:lobachevsky}, \ref{eqn:volume}), the Hessian of $E$ can be derived easily. 
To see $H(E)$ is negative definite, let $v = (x, y)^T \in \mathbb{R}^2$ and $\|v\|= 1$, We have
\begin{eqnarray*}
v^t H(E) v &=& - \frac{-\sin\beta}{\sin(\alpha + \beta)\sin(\alpha)} x^2 + \frac{2\cos(\alpha+\beta)}{\sin(\alpha + \beta)}xy+ \frac{-\sin\alpha}{\sin(\alpha + \beta)\sin(\beta)}y^2 \\
&\leq& \frac{-2}{\sin(\alpha + \beta)} xy + \frac{2\cos(\alpha+\beta)}{\sin(\alpha + \beta)}xy < 0, 
\end{eqnarray*}
since $\sin\alpha > 0, \sin\beta > 0, \sin(\alpha + \beta) > 0, |\cos(\alpha+\beta)| < 1$
\end{proof}

Now we are ready to define the energy functional for a triangle mesh $T$.  
Denote the sets of vertices, edges and triangles of $T$ by $V$, $E$ and $F$. 
We identify a vertex in $T$ with an index, i.e., $V = \{1, 2, \cdots, n\}$, where $n$ is 
the number of vertices in $T$. We denote by $ij$ the edge with vertices $i$ and $j$, by $ijk$, 
the triangle with vertices $i$, $j$ and $k$.  The triangles in $T$ are Euclidean.
Let $\alpha_{jk}^i$ denote the inner angle at vertex $i$ in triangle $ijk$. Let $|X|$ 
denote the cardinality of a set $X$.
Define the energy functional as the sum of the energy over all 
triangles in $T$, i.e.,  
\begin{equation}
\mathcal{E}(T) = \sum_{t\in F} E(t). 
\end{equation}

Let ${\bf A}_T$ denote all possible angle structures given the combinatorial 
structure of $T$, i.e., 
\begin{eqnarray*}
{\bf A}_T= \{(\cdots, \alpha_{jk}^i, \alpha_{ki}^j, \alpha_{ij}^k, \cdots)^t \in 
\mathbb{R}^{3|F|}| 
\text{~for all~} &&ijk \in F: \alpha_{jk}^i + \alpha_{ki}^j+ \alpha_{ij}^k = \pi,\\
									&&\alpha_{jk}^i > 0, \alpha_{ki}^j>0, \alpha_{ij}^k>0 \}. 
\end{eqnarray*}
If the combinatorial structure of $T$ is fixed, the energy $\mathcal{E}$ is a function 
over the angle structures ${\bf A}_T$. 
The following lemma follows easily from Lemma~\ref{lem:concave-t}.
\begin{lemma}
$\mathcal{E}$ is a concave function over the angle structures ${\bf A}_T$. 
\end{lemma}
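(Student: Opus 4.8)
The plan is to deduce this from Lemma~\ref{lem:concave-t} by the standard facts that concavity is preserved under precomposition with an affine map and under taking finite sums. So I would split the argument into three easy checks: (i) the domain $\mathbf{A}_T$ is convex; (ii) each summand $E(t)$, viewed as a function on $\mathbf{A}_T$, is concave; (iii) a finite sum of concave functions on a convex set is concave.

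For (i), note that $\mathbf{A}_T$ is cut out of $\mathbb{R}^{3|F|}$ by the affine equations $\alpha_{jk}^i + \alpha_{ki}^j + \alpha_{ij}^k = \pi$, one for each triangle $ijk \in F$, together with the open half-space conditions $\alpha_{jk}^i > 0$, $\alpha_{ki}^j > 0$, $\alpha_{ij}^k > 0$; an intersection of affine subspaces with open half-spaces is convex (it is an open convex polytope relative to the affine subspace it spans). For (ii), fix $t = ijk \in F$ and regard $E(t)$ as a function on $\mathbf{A}_T$ depending only on the three coordinates attached to $t$, namely $E(t) = \Lambda(\alpha_{jk}^i) + \Lambda(\alpha_{ki}^j) + \Lambda(\alpha_{ij}^k)$. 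On $\mathbf{A}_T$ the constraint $\alpha_{ij}^k = \pi - \alpha_{jk}^i - \alpha_{ki}^j$ lets me write $E(t)$ as the composition of the two-variable function $E(\alpha,\beta)$ of Lemma~\ref{lem:concave-t} with the affine map $(\cdots,\alpha_{jk}^i,\alpha_{ki}^j,\alpha_{ij}^k,\cdots) \mapsto (\alpha_{jk}^i,\alpha_{ki}^j)$ sending $\mathbf{A}_T$ into the open triangle $\{\alpha>0,\ \beta>0,\ \alpha+\beta<\pi\}$. Since Lemma~\ref{lem:concave-t} gives that $E(\alpha,\beta)$ has negative definite Hessian, hence is concave on that triangle, and concavity survives precomposition with an affine map, each $E(t)$ is concave on $\mathbf{A}_T$. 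Then (iii) gives that $\mathcal{E} = \sum_{t\in F} E(t)$ is concave on $\mathbf{A}_T$.

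There is no genuine obstacle here; the only point that deserves a moment of care is the reduction of $E(t)$ to a function of the two free angles via the linear constraint, so that Lemma~\ref{lem:concave-t} applies literally, together with the reminder that it is concavity (not strict concavity) that is stable under summation. I would also add a one-line remark: since any nonzero tangent vector of $\mathbf{A}_T$ moves the pair of free angles of at least one triangle, the strictness in Lemma~\ref{lem:concave-t} in fact upgrades this to strict concavity of $\mathcal{E}$ on $\mathbf{A}_T$, which foreshadows the uniqueness of the maximizer claimed in the introduction, although the present lemma only asserts concavity.
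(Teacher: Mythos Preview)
Your proposal is correct and matches the paper's own treatment: the paper does not write out a proof but simply asserts that the lemma ``follows easily from Lemma~\ref{lem:concave-t},'' and your argument is exactly the standard unpacking of that sentence via convexity of $\mathbf{A}_T$, affine precomposition, and summation. Your added remark on strict concavity is a nice bonus not present in the paper.
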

We remark that  (1) From Lemma~\ref{lem:maximum-t}, $\mathcal{E}$ reaches the
maximum over ${\bf A}_T$ when all triangles in $T$ become equilateral, 
(2) In the meshing application considered in this paper, additional constraints are necessary 
to impose on the angles as we will discuss in the next section. Nevertheless, the triangles 
in the mesh $T$ become more well-shaped (closer to equilateral) when the energy $\mathcal{E}$
increases. 

\section{ Angle Structures and Embeddings}
\label{sec:anglestructure}
In the paper, we consider the problem of improving a triangle mesh $T$ which tessellates 
a planar region. Therefore the triangle mesh $T$ is embedded in the plane where each vertex
$i$ has a coordinate $(x_i, y_i)$ in the plane. The quality of triangles are improved by 
adjusting the coordinates of vertices. On the other hand, our energy functional is defined 
over the angles. In our method, the coordinates of the vertices are adjusted by changing
the angles. Therefore, it is necessary to relate the embeddings of $T$ and its angle structures. 
We relate them using the metric. The metric of a triangle mesh $T$ specifies the 
length for each edge in the mesh, or equivalently, is a function $d: E\rightarrow \mathbb{R}_{>0}$ 
such that the triangle inequalities hold for each triangle, i.e., $d(ij) + d(jk) > d(ik), d(ij) + d(ik) > d(jk), 
d(ik) + d(jk) > d(ij)$ for triangle $ijk$. 

It is obvious that an embedding of a triangle mesh, or equivalently, the coordinates
$\{(x_i, y_i)\}_{i\in V}$, induces a metric for the triangle mesh where $d(ij) = \|(x_i, y_i) - (x_j, y_j)\|$, 
for any $ij \in E$, and a metric $d$ of a triangle mesh induces an angle structure $A \in {\bf A}_T$ 
where $\cos\alpha_{ij}^k =\frac{d^2(ki) + d^2(kj) - d^2(ij)}{2d(ki)d(kj)}$ 
for any $ijk \in F$. 
Furthermore, the induced angle structure is an invariant of rigid transformations 
(translations, rotations and reflections) and uniform scaling of the embedding of the triangle mesh. 
In fact, one can recover the embedding from the induced angle structure as follows. Observe that 
if the inner angles of a triangle are given, one can calculate the coordinate of the third vertex 
from the coordinates of the other two. First, pick a triangle in $F$ and embed it into the plane 
with given inner angles, i.e., compute the coordinates for its three vertices. Up to a rigid 
transformation and a uniform scaling, these coordinates are uniquely determined. Then consider 
its neighboring triangles, each of which has two of its 
vertices already embedded into the plane. Based on the previous observation, the coordinate of 
its third vertex is uniquely determined from the angles and can be easily computed. Once embed these 
neighboring triangles in the plane, consider their neighboring triangles and repeat the above 
procedure until all triangles get embedded into the plane. This leads to an algorithm to layout 
a triangle mesh based on its angle structure. See Algorithm 3. 

\begin{figure}[t]
\begin{center}
\begin{tabular}{cc}
\includegraphics[width=0.45\textwidth]{./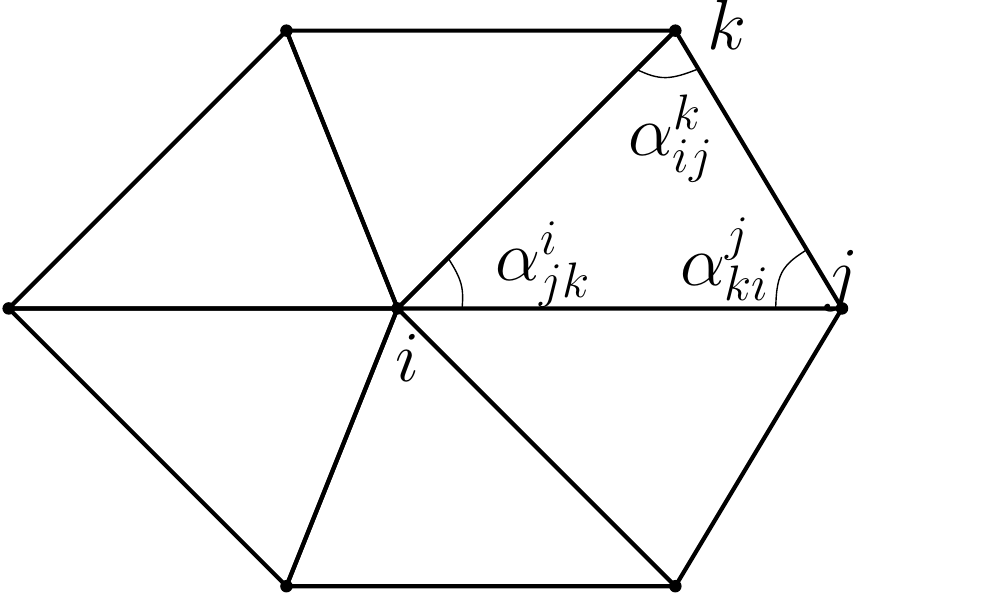} & \includegraphics[width=0.45\textwidth]{./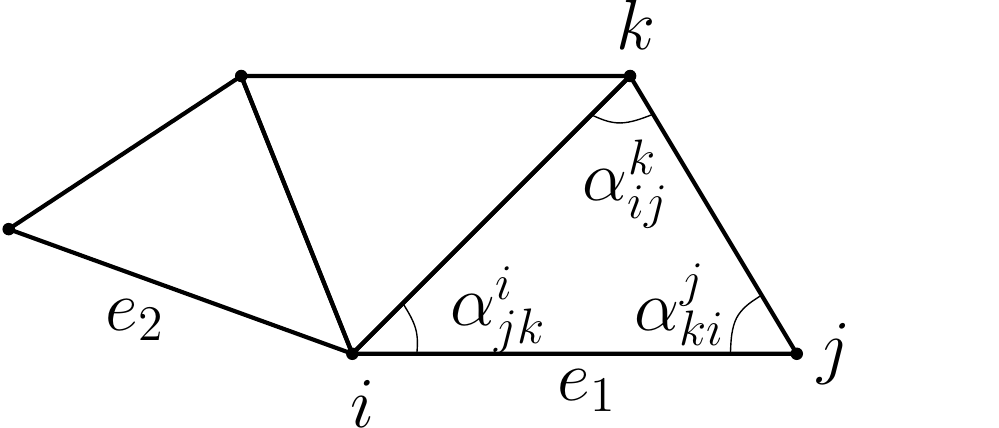} \\
(a) & (b)
\end{tabular}
\end{center}
\vspace{-4mm}
\caption{Constraints on angles. }
\label{fig:one_ring}
\end{figure}

On the other hand,  not every angle structure $A\in \bf{A}_T$ is induced from an embedding of $T$ 
on the plane. To see what are the constraints that the angle structure
induced from an embedded triangle mesh satisfies, consider the one-ring neighbor of a vertex either in the interior 
or on the boundary, as shown in Figure~\ref{fig:one_ring}. We use the following two quantities
to describe the constraints. One quantity is the the angle sum at vertex $i$ given
an angle structure $A$:
\begin{equation}
\Theta(i, A) = \sum_{jk: ijk\in F} \alpha_{jk}^i.
\label{eqn:anglesum}
\end{equation}
The other quantity is the so-called holonomy at vertex $i$ given an angle structure $A$
\footnote{Precisely, it is the holonomy of a sequence of the triangles 
incident to vertex $i$. See~\cite{r-esssh-94}}:
\begin{equation}
H(i, A) = \sum_{jk: ijk \in F} (\ln\sin\alpha_{ij}^k - \ln\sin \alpha_{ki}^j).
\label{eqn:holonomy}
\end{equation}

Consider an interior vertex $i$ of a triangle mesh on the plane. 
See Figure~\ref{fig:one_ring}(a).  
First, the angle sum at  interior vertex $i$ has to be $2\pi$, i.e., 
\begin{equation}
\Theta(i, A)  = 2\pi.
\label{eqn:anglesum-i}
\end{equation}
Second, notice that if let $l_{ij}^k$ be the length of the edge opposite to vertex $k$ in triangle $ijk$, 
then  $\prod_{jk: ijk \in F} l_{ij}^k/l_{ki}^j = 1$. By law of sines, we have for any interior vertex $i$ 
\begin{equation}
H(i, A) = 0.
\label{eqn:holonomy-i}
\end{equation}
One can show that if its angle structure satisfies equation~(\ref{eqn:anglesum-i}, \ref{eqn:holonomy-i}) for each 
interior vertex $i$, the triangle mesh $T$ is locally flat and can then be immersed into the plane. An immersion
is locally a one-to-one map. However, note that an immersion is not 
necessary an embedding, which may have global self-intersections~\cite{dt}. It is relatively hard to 
impose or even describe the constraints on the angles to circumvent global self-intersections. 
Fortunately, we have not observed such global self-intersections in our experiments. This may be due to 
the fact that we also preserve the boundary as described below, which makes them very rare. 

In addition, as our purpose is to improve the quality of a mesh tessellating a fixed planar region, we want to 
preserve the boundary. For simplicity, assume the boundary has one connected component. 
See section~\ref{sec:algorithm} for how we deal with multiple connected components on the boundary. 
Consider a vertex $i$ on the boundary. Let $e_1$ and $e_2$ are two edges on the boundary incident to vertex $i$. 
See Figure~\ref{fig:one_ring}(b). 
First, the angle sum at a boundary vertex $i$ need to be preserved, i.e., 
\begin{equation}
\Theta(i, A) = \Theta_i
\label{eqn:anglesum-b}
\end{equation}
where $\Theta_i$ is the angle between $e_1$ and $e_2$ containing the interior of the planar region.  
Second, the ratio between the length of two consecutive edges on the boundary need to be preserved. One can
write this ratio in terms of the holonomy of the boundary vertex $i$  
\begin{equation}
H(i, A) = \ln (l_1/l_2)
\label{eqn:holonomy-b}
\end{equation}
where $l_1, l_2$ are the length of $e_1, e_2$ respectively. 
Note if the angle structure satisfies equation~(\ref{eqn:anglesum-b}, \ref{eqn:holonomy-b}) 
for each boundary vertex $i$, then the boundary is preserved, up to a rigid transformation 
and a uniform scaling.  

In summary, we have two types of extra constraints imposed on the angle structures so that the triangle mesh $T$
can be immersed into the plane with the shape of the boundary preserved. One type is the angle sum 
imposed on each vertex. Given $\Theta \in \mathbb{R}^{|V|}$ with $\Theta_i$ specifying the angle 
sum at vertex $i$, define a subset of ${\bf A}_T$ as 
\begin{eqnarray*}
{\bf L}_{T, \Theta}= \{ A \in {\bf A}_T| 
\text{~for all~} &&i \in V: \Theta(i, A)  = \Theta_i\}. 
\end{eqnarray*}
The other type is the holonomy condition imposed on each vertex. 
Given $H \in \mathbb{R}^{|V|}$ with $H_i$ specifying the holonomy at vertex $i$, define
another subset of ${\bf A}_T$ as
\begin{eqnarray*}
{\bf N}_{T, H}= \{ A \in {\bf A}_T| 
\text{~for all~} &&i \in V: H(i, A)  = H_i\}. 
\end{eqnarray*}
Since the total sum of all the inner angles has to equal $\pi |F|$, there are only $|V|-1$ number of 
independent equality constraints in defining the subset ${\bf L}_{T, \Theta}$. In addition, the  
sum off the holonomy of all the vertices has to be $0$ since each inner angle appears twice in this 
sum, once positive and once negative. So there are also $|V|-1$ number of  
independent equality constraints in defining the subset ${\bf N}_{T, H}$. 

Observe that the equality constraints on the angle sum is linear in angles and thus
${\bf L}_{T, \Theta}$ is always a convex subset of ${\bf A}_{T}$, which leads to
the following lemma. 
\begin{lemma}
$\mathcal{E}$ is a concave function over the subset of the angle structures ${\bf L}_{T, \Theta}$
for any $\Theta$. 
Furthermore, if $A \in {\bf L}_{T, \Theta}$ is an extremal point of $\mathcal{E}$, then the holonomy at each
interior vertex is automatically $0$. 
\label{lem:concave-lt}
\end{lemma}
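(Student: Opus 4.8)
\emph{Plan.} I would prove the two assertions separately: concavity is essentially formal given Lemma~\ref{lem:concave-t}, and the holonomy statement is a Lagrange‑multiplier computation.

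\emph{Concavity.} The set ${\bf L}_{T,\Theta}$ is cut out of ${\bf A}_T$ by the linear equations $\Theta(i,A)=\Theta_i$, $i\in V$ (see~(\ref{eqn:anglesum})), so it is a convex subset of $\mathbb{R}^{3|F|}$. Since $\mathcal{E}=\sum_{t\in F}E(t)$ and the angle variables of distinct triangles are disjoint, the Hessian of $\mathcal{E}$ on the open region where every angle lies in $(0,\pi)$ is block diagonal with each block the negative definite Hessian of Lemma~\ref{lem:concave-t}; hence $\mathcal{E}$ is (strictly) concave there, and its restriction to the convex subset ${\bf L}_{T,\Theta}$ is concave. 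That gives the first claim.

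\emph{Holonomy at interior vertices.} Regard $\mathcal{E}$ as a smooth function on the open set $U\subset\mathbb{R}^{3|F|}$ where every angle lies in $(0,\pi)$, and note that ${\bf L}_{T,\Theta}$ is the relatively open subset of the affine subspace $W\subset\mathbb{R}^{3|F|}$ defined by the $|F|$ per‑triangle equations $\alpha_{jk}^i+\alpha_{ki}^j+\alpha_{ij}^k=\pi$ together with the $|V|$ per‑vertex equations $\Theta(i,A)=\Theta_i$. Let $A$ be an extremal point, which we may assume lies in the relative interior of ${\bf L}_{T,\Theta}$ (all angles strictly between $0$ and $\pi$). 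Then $\nabla\mathcal{E}(A)$ is orthogonal to the direction space of $W$, i.e.\ it lies in the span of the normals of the defining equations of $W$ (no constraint qualification is needed, since the feasible region is relatively open in $W$). Thus there are scalars $\mu_{ijk}$ (one per triangle) and $\lambda_i$ (one per vertex) with $\partial\mathcal{E}/\partial\alpha_{jk}^i=\mu_{ijk}+\lambda_i$ for every corner. By~(\ref{eqn:lobachevsky}), $\partial\mathcal{E}/\partial\alpha_{jk}^i=\Lambda'(\alpha_{jk}^i)=-\ln(2\sin\alpha_{jk}^i)$, so the first‑order conditions read
\[
\ln\sin\alpha_{jk}^i \;=\; -\ln 2-\mu_{ijk}-\lambda_i
\qquad\text{for every corner }(i;\,jk).
\]
Now fix an interior vertex $i$ and insert these relations into the holonomy~(\ref{eqn:holonomy}). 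In each triangle $ijk$ incident to $i$, both $\ln\sin\alpha_{ij}^k$ and $\ln\sin\alpha_{ki}^j$ carry the same constant $-\ln 2-\mu_{ijk}$, so their difference collapses to a difference $\lambda_j-\lambda_k$ of the multipliers attached to the two neighbors of $i$ in that triangle. Since $i$ is interior, its incident triangles form a closed fan about $i$: each neighbor is shared by exactly two consecutive triangles and enters the two corresponding differences with opposite signs, so the sum over the fan telescopes to $0$. Hence $H(i,A)=0$. (For a boundary vertex the same computation leaves only the difference of the multipliers attached to the two boundary neighbors of $i$, which need not vanish — consistent with the claim being stated only for interior vertices, and with~(\ref{eqn:holonomy-b}).)

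\emph{Main obstacle.} I expect the only real difficulty to be bookkeeping: introducing the multipliers for the two families of linear constraints simultaneously and then verifying that the cancellation in the holonomy sum lines up with the cyclic ordering of the one‑ring, under the sign convention of~(\ref{eqn:holonomy}). The remaining ingredients — smoothness of $\mathcal{E}$ with $\Lambda'(x)=-\ln(2\sin x)$, convexity of ${\bf L}_{T,\Theta}$, and the first‑order condition on a relatively open slice of an affine subspace — are routine.
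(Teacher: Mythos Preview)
Your proposal is correct and aligns with the paper's treatment. For concavity, the paper argues exactly as you do: the constraints defining ${\bf L}_{T,\Theta}$ are linear, so the set is convex, and concavity is inherited from Lemma~\ref{lem:concave-t}. For the holonomy claim the paper gives no proof of its own but defers to a reference; your Lagrange--multiplier computation (reading off $\ln\sin\alpha_{jk}^i=-\ln 2-\mu_{ijk}-\lambda_i$ from $\Lambda'(x)=-\ln(2\sin x)$ and telescoping $\sum(\lambda_j-\lambda_k)$ around the closed fan of an interior vertex) is precisely the standard argument that reference contains, so there is no substantive difference in approach.
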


The proof for an interior vertex having $0$ holonomy given an extremal point of $\mathcal{E}$ 
can be find for example in~\cite{dc}. However, in order to preserve the shape of the boundary, 
one need to impose the extra holonomy conditions to all boundary vertices, which unfortunately
does not hold automatically. These equality constraints on the holonomy are nonlinear in angles. 
Thus the maximization of the energy functional $\mathcal{E}$ becomes non-convex. 

\section{The Algorithms}
\label{sec:algorithm}
In this section, we describe an algorithm which takes input a triangle mesh tessellating
a region on the plane, and outputs another triangle mesh tessellating the same region with
the shape of the triangles improved to closer to the equilateral triangles. We sketch our
{\bf remeshing} algorithm in pseudo-code as follows: 
\begin{algorithm}[h!]
\floatname{algorithm}{Algorithm}
\caption{ { {\bf remeshing}($T_0=(V_0, E_0, F_0), \{(x_i, y_i)\}_{i=1}^{|V_0|}$) } }
\begin{algorithmic}[1]
\STATE Call $\text{\bf cut}(T_0, \{(x_i, y_i)\}_{i=1}^{|V_0|})$ to cut the mesh $T_0$ into a topological 
disk to connect the different connected components of the boundary and obtain a new mesh $T=(V, E, F)$. 
\STATE Compute the angle structure $A$ of $T$ induced by the coordinates $\{(x_i, y_i)\}_{i=1}^{|V|}$.
\STATE Set $\Theta\in \mathbb{R}^{|V|}$ so that $\Theta_i = \Theta(i, A)$. 
\STATE Set $H \in \mathbb{R}^{|V|}$ so that $H_i=H(i, A)$. 
\STATE Call $\text{{\bf argmax}}(T, A, \Theta, H)$ to maximize the energy $E$ over ${\bf L}_{T, \Theta} \cap {\bf N}_{T, H}$ 
and obtain the corresponding angle structure, denoted $A^*$. (Section~\ref{}). 
\STATE Call $\text{{\bf layout}}(T, A^*, \{(x_i, y_i)\}_{i=1}^{|V|})$ to layout the triangle mesh $T$ based on $A^*$ and 
compute the new coordinates $\{(x^*_i, y^*_i)\}_{i=1}^{|V|}$ for vertices. (Section~\ref{}).  
\STATE Output the new triangle mesh $T_0 = (V_0, E_0, F_0), \{(x^*_i, y^*_i)\}_{i=1}^{|V_0|}$. 
\end{algorithmic}
\end{algorithm}

When there are multiple connected components on the boundary, the constraints described in 
Section~\ref{sec:anglestructure} only preserve the shape for each component but not their
relative position. To deal with this issue, in the first step, we connect the different components 
using the shortest paths and then cut the mesh along them into a topological disk with only one boundary
component. In this step, any vertex $i$ on a cutting path may be split into 
several copies in $T$ each of which takes $(x_i, y_i)$ as its initial coordinates. In the final step
of outputting the new mesh, these copies of vertex $i$ in $T$ have the same new coordinates as the 
shape of the boundary is preserved. So just take one of them as the new coordinate $(x^*_i, y^*_i)$ for
vertex $i$ in $T_0$.  Figure~\ref{fig:hole3_alg} illustrates the algorithm. As we can see, that our algorithm
improves the quality of the triangles, especially those near the boundary. 

\begin{figure}[!t]
\begin{center}
\begin{tabular}{ccc}
\includegraphics[width=0.25\textwidth]{./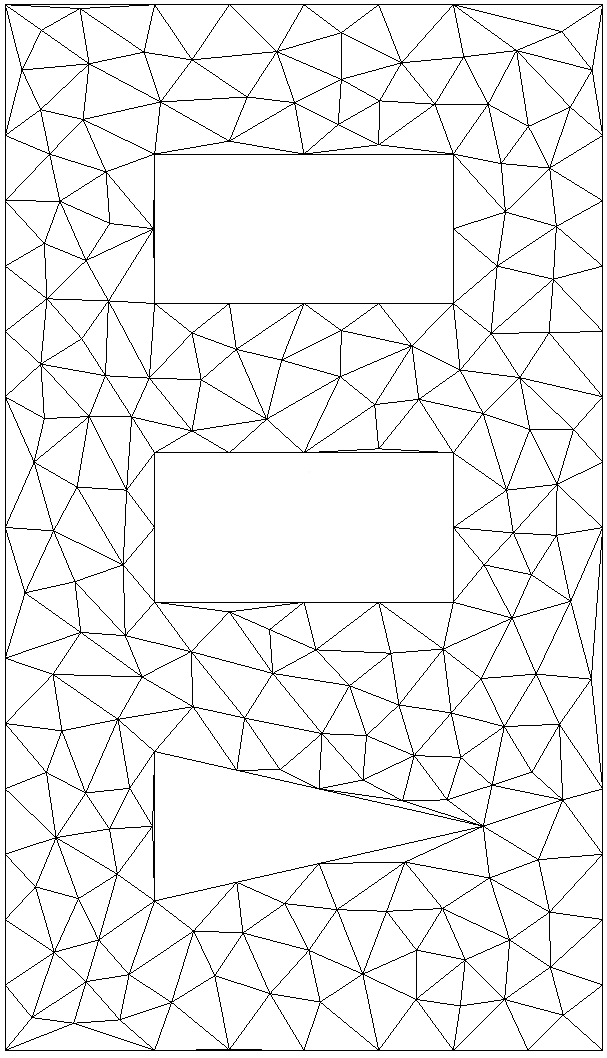} & \includegraphics[width=0.25\textwidth]{./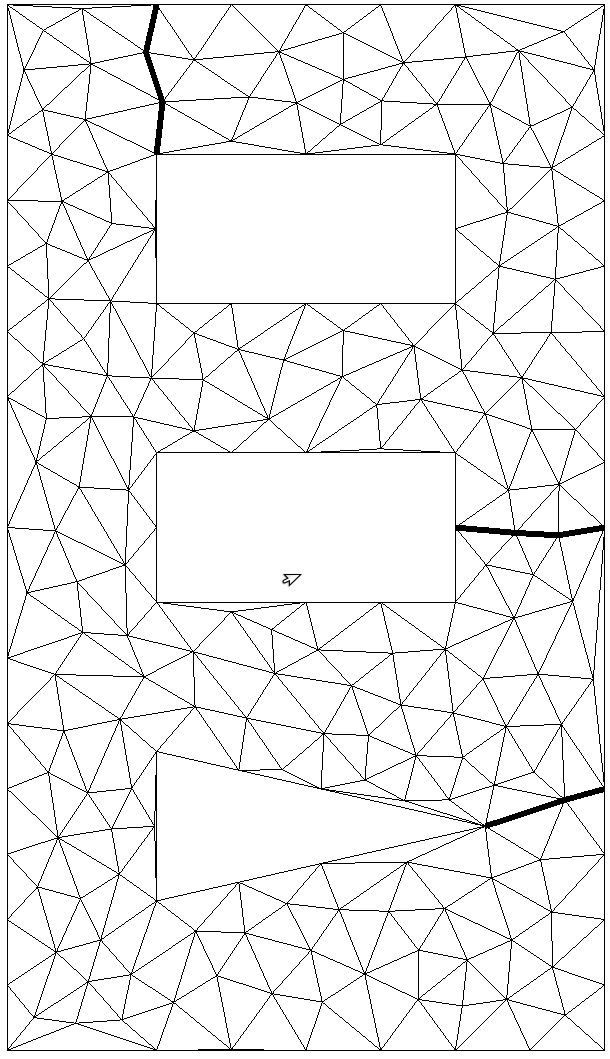} & \includegraphics[width=0.25\textwidth]{./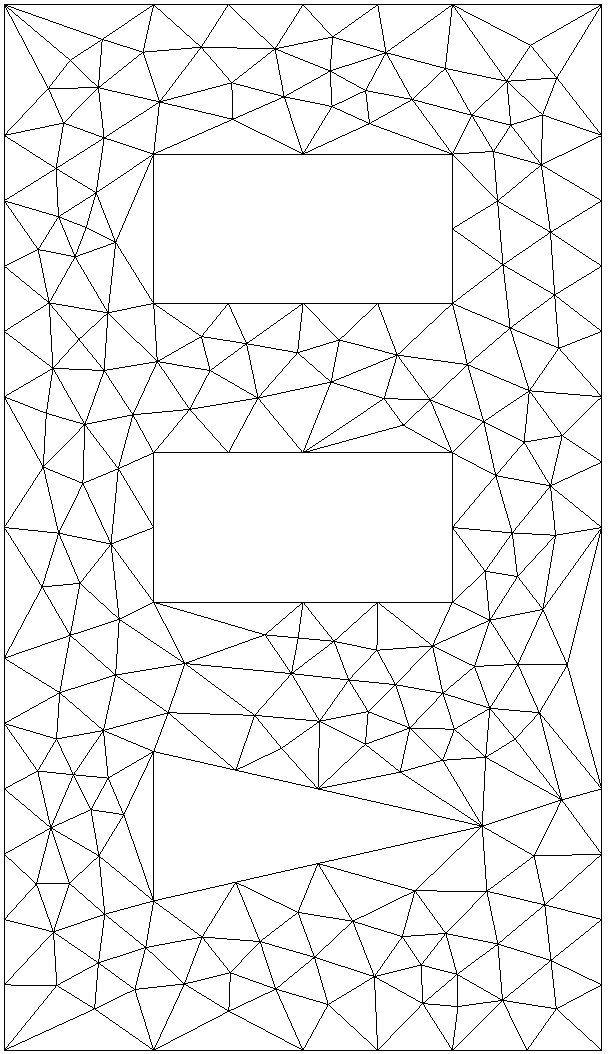}\\
input mesh & cut paths & output mesh
\end{tabular}
\end{center}
\vspace{-4mm}
\caption{The bold edges in the middle picture are the paths along which the mesh is cut into a topological disk.  
}
\label{fig:hole3_alg}
\end{figure}

\subsection{Maximize $\mathcal{E}$}
In this subsection, we describe an algorithm to find an angle structure $A^*$ which maximizes 
the energy functional $\mathcal{E}$ over the subset ${\bf L}_{T, \Theta} \cap {\bf N}_{T, H}$ of angle structures. 
This is the key part of the algorithm. 
The subset ${\bf N}_{T, H}$ is nonlinear in angles. Thus it is an optimization
problem over nonlinear constraints. We use the interior-point method and follow the implementation of
the Matlab routing {\emph{fmincon}} to solve this optimization problem. 

Since there are $|V|-1$ number of nonlinear 
equality constraints in defining the subset ${\bf N}_{T, H}$, it is hard for the interior-point method 
to search for feasible solutions in a subset of this high codimension. To address this issue, we break
the optimization procedure into two steps. 
In the first step. we maximize $\mathcal{E}$ over ${\bf L}_{T, \Theta}$. 
By Lemma~\ref{lem:concave-lt}, $\mathcal{E}$ is concave over ${\bf L}_{T, \Theta}$ and has a unique maximum. 
This step can be done very efficiently using the routing $fmincon$.  In the second step, 
we minimize another energy functional $\mathcal{D}$ over ${\bf L}_{T, \Theta}$ where given a vector $H \in \mathbb{R}^{|V|}$, 
\begin{equation}
\mathcal{D}(A) = \sum_{i\in V}  (H(i, A) - H_i)^2, 
\end{equation}
for any $A \in {\bf A}_T$. 
$\mathcal{D}$ measures how much the angle structure $A$ violates the nonlinear equality constraints, 
and reaches the minimum $0$ when $A$ is in ${\bf N}_{T, H}$.  This minimization is the most time consuming 
step of the algorithm.  In the implementation, we supply the gradient and the Hessian matrix of both energy 
functional $\mathcal{E}$ and $\mathcal{D}$ to the routing {\emph{fmincon}}, which significantly improves the 
efficiency of the algorithm. 
The pseudo-code of the algorithm {\bf argmax} is follows: 
\begin{algorithm}[h!]
\floatname{algorithm}{Algorithm}
\caption{ { {\bf argmax}($T = (V, E, F), A, \Theta, H$) } }
\begin{algorithmic}[1]
\STATE Maximize $\mathcal{E}$ over ${\bf L}_{T, \Theta}$ 
using {\emph{fmincon}} with initial guess $A$ and obtain a new angle structure $A_1$. 
\STATE Minimize $\mathcal{D}$ over ${\bf L}_{T, \Theta}$  using {\emph{fmincon}} with initial guess $A_1$ and obtain a new angle structure $A_2$. 
\STATE Output the angle structure $A^*=A_2$. 
\end{algorithmic}
\end{algorithm}

Note that in Algorithm 2, the second step of minimizing $\mathcal{D}$ is necessary for preserving the boundary
but may decrease $\mathcal{E}$ and deteriorate the quality of triangles. In fact, the initial angle structure $A$ 
is a minimizer of $\mathcal{D}$. To see the performance of this step, we tested the following procedure to maximize $\mathcal{E}$ 
over ${\bf L}_{T, \Theta} \cap {\bf N}_{T, H}$. For any $\delta > 0$, define
\begin{eqnarray}
{\bf N}_{T, H, \delta}= \{ A \in {\bf A}_T| \mathcal{D}(A) < \delta \}. 
\end{eqnarray}
Notice that ${\bf N}_{T, H, 0} = {\bf N}_{T, H}$. 
We relax the constraints of nonlinear equality to inequalities and then maximize $\mathcal{E}$ over 
an enlarged subset ${\bf L}_{T, \Theta} \cap {\bf N}_{T, H, \delta}$ for some $\delta$, 
and then reduce $\mathcal{D}$ within ${\bf L}_{T, \Theta}$ along its negative gradient to $\delta /4$. 
Repeat the above steps with $\delta/2$. In this way, we make sure the final $A^*$ is a local maximum of
$\mathcal{E}$ in ${\bf L}_{T, \Theta} \cap {\bf N}_{T, H}$. 
We observe that this procedure produces a mesh with the same quality as  {\bf argmax} but needs 
significantly more computation time. This shows that the second step of minimizing $\mathcal{D}$ 
somehow also respects the energy functional $\mathcal{E}$ well, which is worth further investigation.

\subsection{Layout Mesh}
The procedure of layout a triangle mesh from its angle structure is described in Section~\ref{sec:anglestructure}. 
Since the shape of the boundary is preserved, we embed the vertices on the boundary using their original 
coordinates. Now for any triangle with one edge on the boundary, the coordinates of the third vertex 
can then be computed uniquely from the angles. Then propagate and compute the coordinates for the other vertices.
We sketch the algorithm {\bf layout} in pseudo-code as follows:
\begin{algorithm}[h!]
\floatname{algorithm}{Algorithm}
\caption{ { {\bf layout}($T = (V, E, F), A^*, \{(x_i, y_i)\}_{i=1}^{|V|}$) } }
\begin{algorithmic}[1]
\STATE Embed the vertices on the boundary using their original coordinates $\{(x_i, y_i)\}$
\STATE Compute the new coordinates $(x^*_i, y^*_i)$ of the third vertex for each triangle with at least one edge on the boundary and mark it. 
\STATE Push their neighboring triangles into queue $Q$ and mark them too. 
\REPEAT
\STATE Pop a triangle from $Q$ and compute the new coordinates $(x^*_i, y^*_i)$ for the third vertex. 
\STATE For each neighboring triangle, if it is not marked, push it into $Q$ 
\UNTIL{$Q$ is empty}
\STATE Output the new coordinates $\{(x^*_i, y^*_i)\}_{i=1}^{|V|}$ 
\end{algorithmic}
\end{algorithm}

\subsection{Cut Mesh}
In this subsection, we describe a method to implement the cut algorithm. The basic idea is 
to find some paths to connect the different connected components together and then cut
the mesh along these paths. The pseudo-code of algorithm {\bf cut} is as follows: 
\begin{algorithm}[h!]
\floatname{algorithm}{Algorithm}
\caption{ { {\bf cut}($T_0 = (V_0, E_0, F_0), \{(x_i, y_i)\}_{i=1}^{|V|}$) } }
\begin{algorithmic}[1]
\STATE Obtain the connected components of the boundary, denoted $\{B_1, \cdots, B_k\}$.
\STATE Compute the weight $w_{ij} = \|(x_i, y_i) - (x_j, y_j)\|$ for edge $ij$. 
\STATE Run Dijkstra's algorithm on the weighted $1$-skeleton of $T_0$ with all vertices on the boundary as sources,
and mark a vertex as $i$ if $d(i, B_i) \leq d(i, B_j)$. 
\STATE Construct a graph $G$ where nodes are $\{B_i\}$, and $B_i, B_j$ are connected if there is an edge in 
$E_0$ whose endpoints are marked as $i, j$. 
\STATE For each edge $B_iB_j$ in $G$, compute the shortest path connecting $B_i$ and $B_j$ and use its length to 
weigh the edge $B_iB_j$ in $G$. 
\STATE Compute the minimal spanning tree $T_G$ of $G$, and cut the mesh along the shortest path connecting $B_i$ and $B_j$
if edge $B_iB_j$ is in $T_G$. 
\STATE Output the cut mesh $T=(V, E, F)$.  
\end{algorithmic}
\end{algorithm}

\vspace{2mm}
\noindent{\bf Complexity:} The complexity of the algorithm is dominated by optimization step.  
The complexity of algorithm {\bf layout} and algorithm {\bf cut} are $O(|F|)$ and  $O(|E|+|V|\log|V|)$ respectively.  
See section~\ref{sec:results} for the performance of the algorithm over various datasets.

\section{Results}
\label{sec:results}
In this section, we apply our meshing algorithm to various datasets and show its performance. All input meshes
are obtained by the Delaunay refinement algorithm \emph{triangle} by Shewchuk~\cite{Shewchuk:2002:DRA}.  

Figure~\ref{fig:gear_01} and~\ref{fig:lock_1566} show two meshes and compare our algorithm with the standard centroidal Voronoi tessellation (CVT)
where the boundary is fixed. Following~\cite{Tournois:2009:IDR}, we measure the quality of elements based on its inner angles and 
aspect ratios (circumradii to shortest edge ratios). As we can see, our method performs better in $L_\infty$ sense
while CVT may do better in $L_2$ sense. Namely, both the angles and the aspect ratios in the meshes generated by our 
method lie in tighter intervals than CVT, while the meshes generated by CVT may have more triangles close to the equilateral
one. 

\begin{figure}[!h]
\begin{center}
\begin{tabular}{ccc}
\includegraphics[width=0.3\textwidth]{./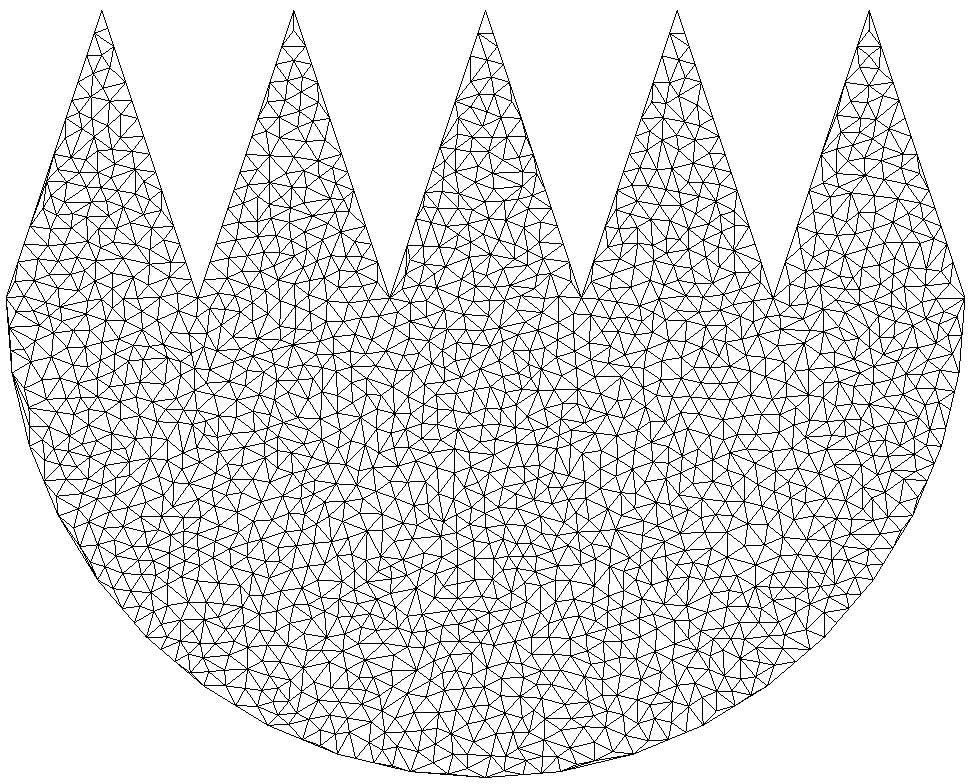} & \includegraphics[width=0.3\textwidth]{./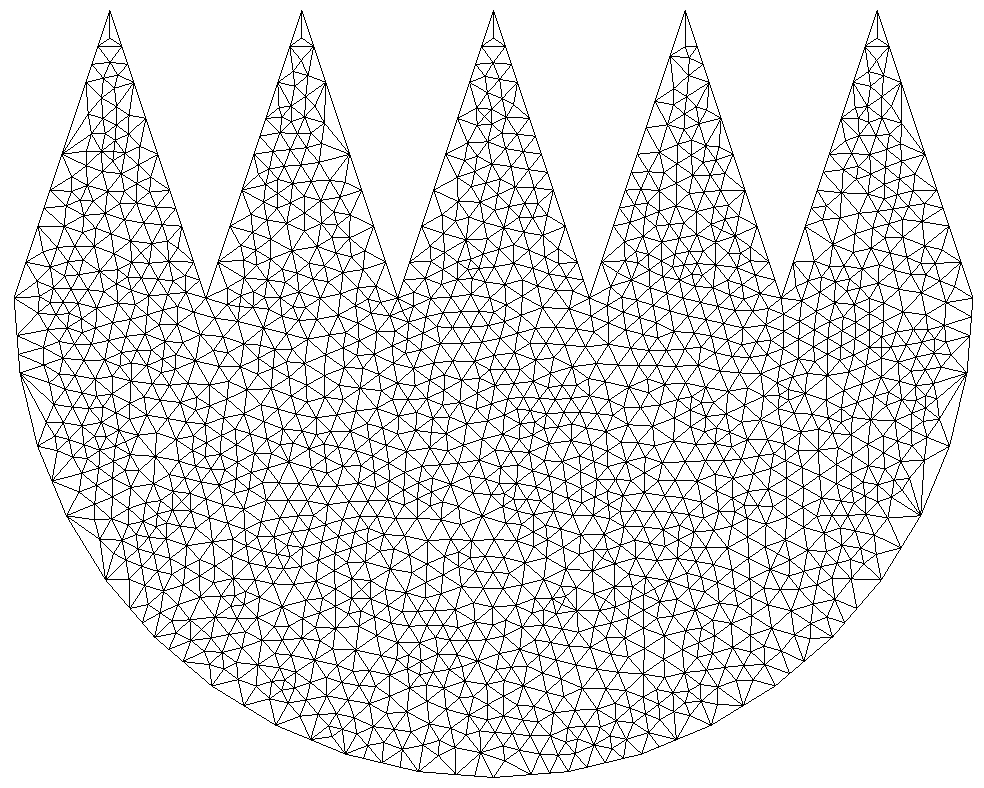} & \includegraphics[width=0.3\textwidth]{./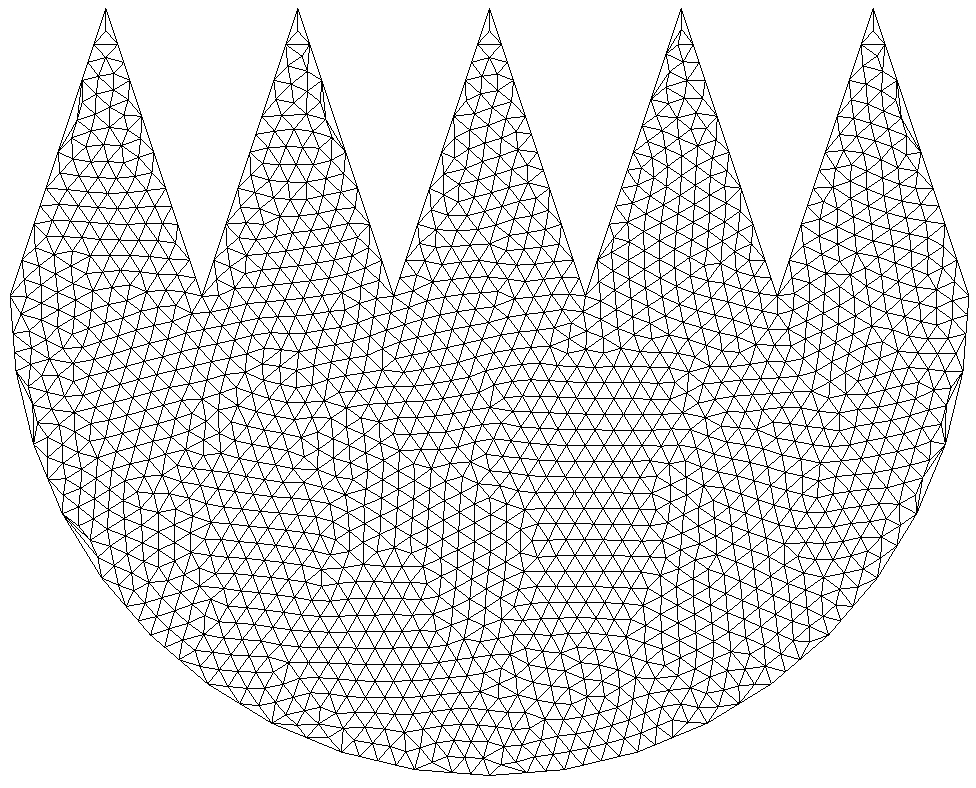}\\
Delaunay refinement & Our method & CVT
\end{tabular}
\begin{tabular}{cc}
\includegraphics[width=0.45\textwidth]{./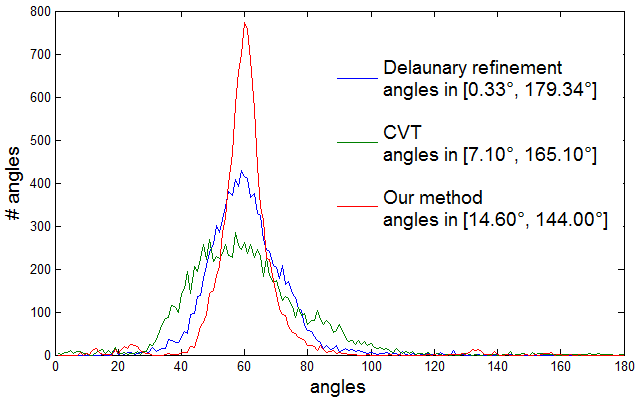} & \includegraphics[width=0.45\textwidth]{./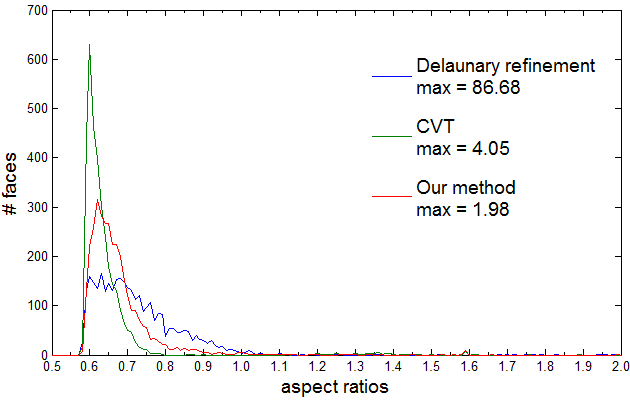}
\end{tabular}
\end{center}
\vspace{-4mm}
\caption{The first row shows the meshes and the second row shows the histograms of the angles and the aspect ratios respectively. 
}
\label{fig:gear_01}
\end{figure}

\begin{figure}[!h]
\begin{center}
\begin{tabular}{ccc}
\includegraphics[width=0.3\textwidth]{./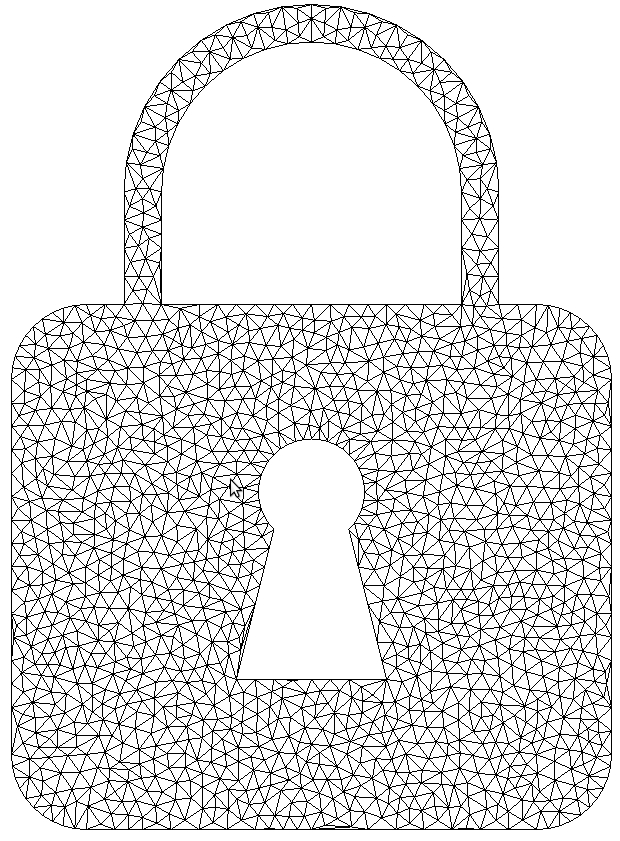} & \includegraphics[width=0.3\textwidth]{./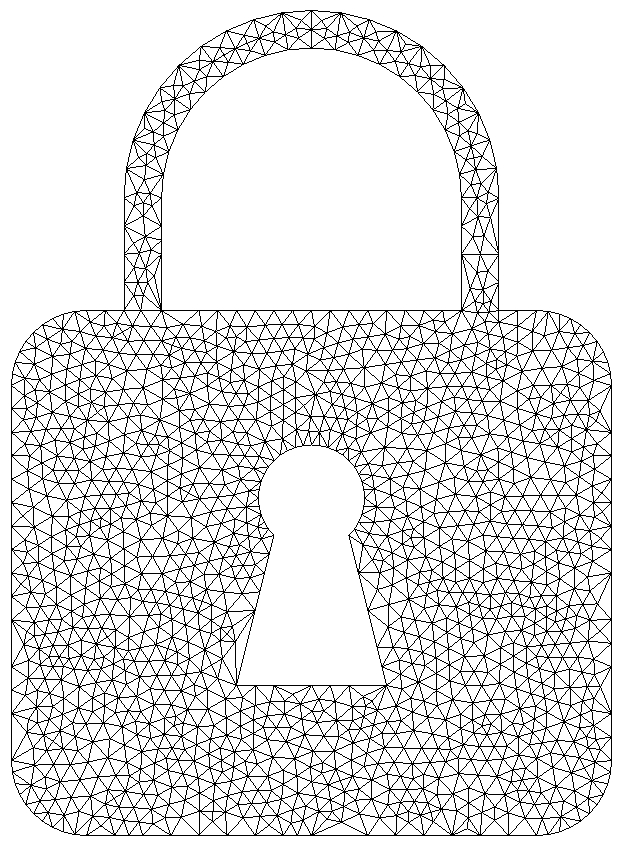} & \includegraphics[width=0.3\textwidth]{./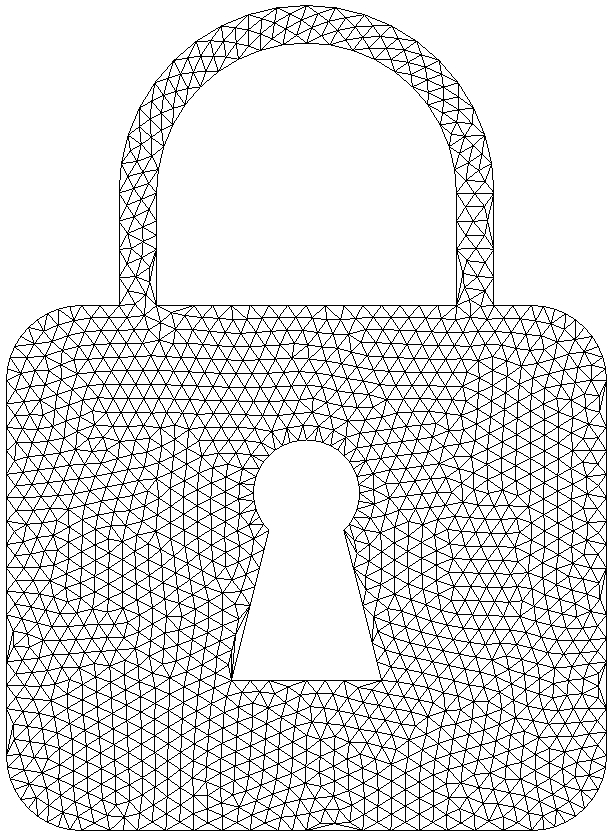}\\
Delaunay refinement & Our method & CVT
\end{tabular}
\begin{tabular}{cc}
\includegraphics[width=0.45\textwidth]{./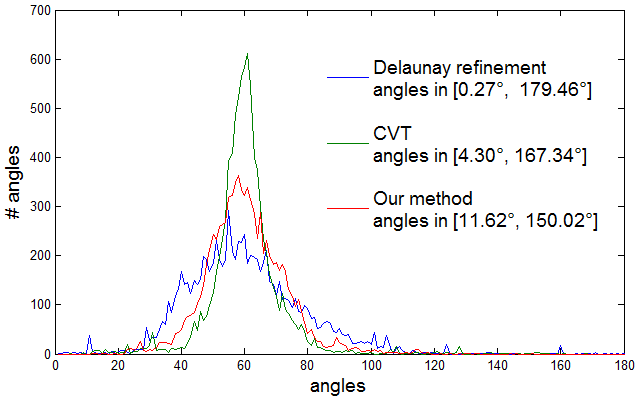} & \includegraphics[width=0.45\textwidth]{./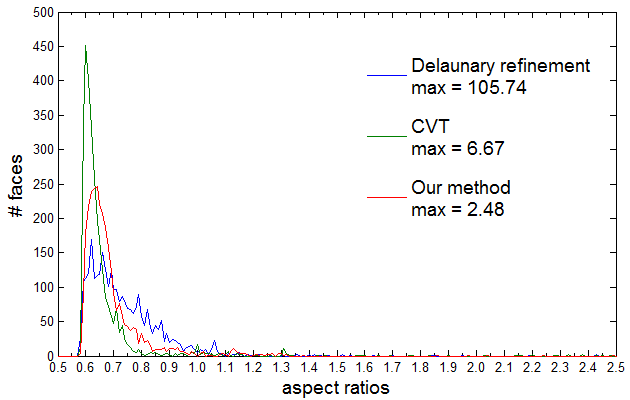} 
\end{tabular}
\end{center}
\vspace{-4mm}
\caption{The first row shows the meshes and the second row shows the histograms of the angles and the aspect ratios respectively. 
}
\label{fig:lock_1566}
\end{figure}

Table~\ref{tbl:timing} shows the timings of the procedure {\bf argmax} over input meshes with 
increasing number of vertices.  The other procedures of the algorithm are negligible in terms
of timing. Figure~\ref{fig:H_hole3} shows the meshes with three different resolutions for each model. 
As we can see, the procedure of minimizing $\mathcal{D}$ consumes most of computation time, 
which also increases faster as the number of vertices increases. 

Finally Figure~\ref{fig:family} and~\ref{fig:taj} show the meshes generated by our algorithm on some interesting 
planar regions. 

\begin{table*}[h]
\begin{center}
\begin{tabular}{|c|c|c|c|c|c|}
\hline
$|V|$ in model H &~~~393~~~&~~~756~~~&~~~1476~~~&~~~3584~~~&~~~7072~~~\\
\hline
Max $\mathcal{E}$ &  5 & 23 & 61 & 249 & 855 \\
\hline
Min $\mathcal{D}$ &  62 & 170 & 357 & 1758 & 7147 \\
\hline
$|V|$ in model Hole3 & 	340 & 660 & 1468 & 3136 & 6135\\
\hline
Max $\mathcal{E}$ & 9  & 18 & 38 & 99   &  817 \\
\hline
Min $\mathcal{D}$ & 22  & 72 & 187 & 760 &  2870\\
\hline
\end{tabular}
\end{center}
\caption{ The rows of Max $\mathcal{E}$ and Min $\mathcal{D}$ collect the timings (in seconds) of the first step and the second step in Algorithm 2
respectively.
\label{tbl:timing}
}
\end{table*}

\begin{figure}[!t]
\begin{center}
\begin{tabular}{ccc}
\includegraphics[width=0.3\textwidth]{./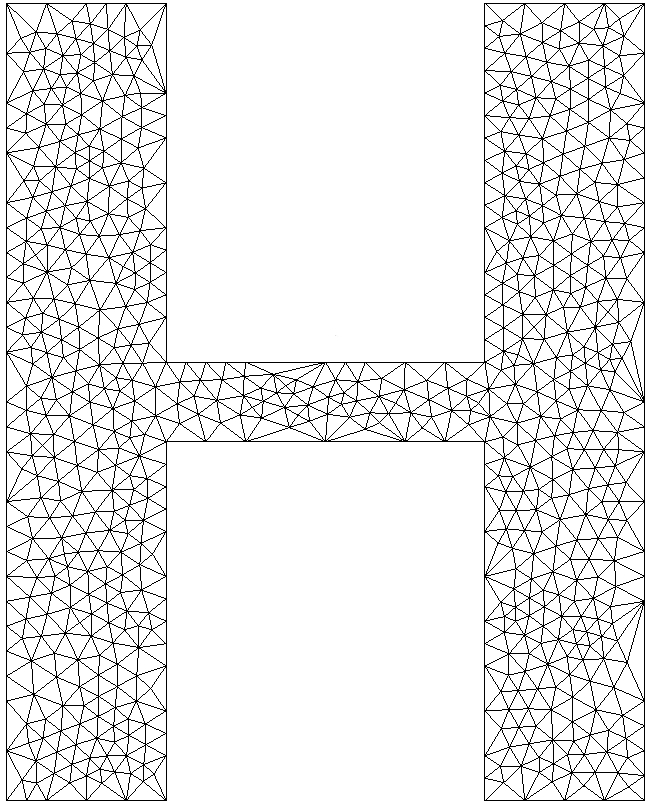} & \includegraphics[width=0.3\textwidth]{./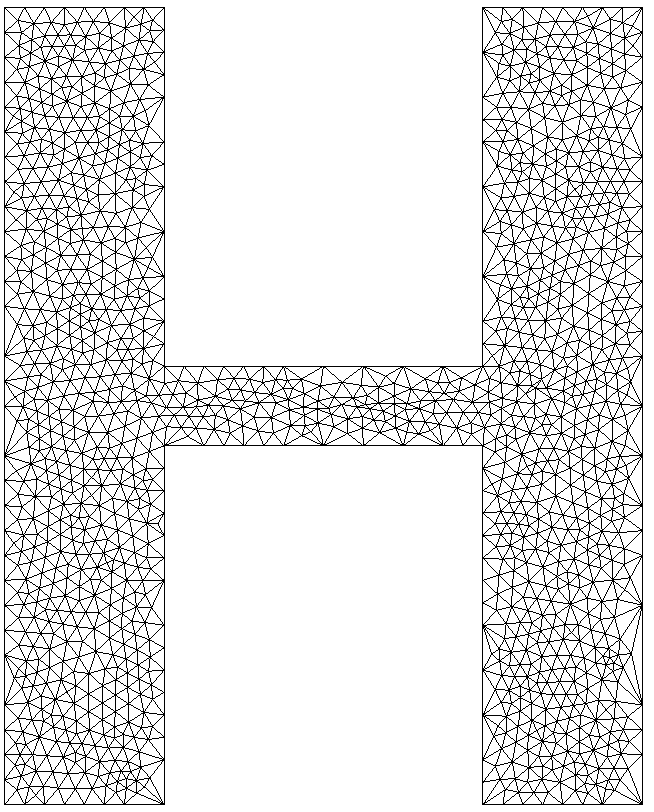} & \includegraphics[width=0.3\textwidth]{./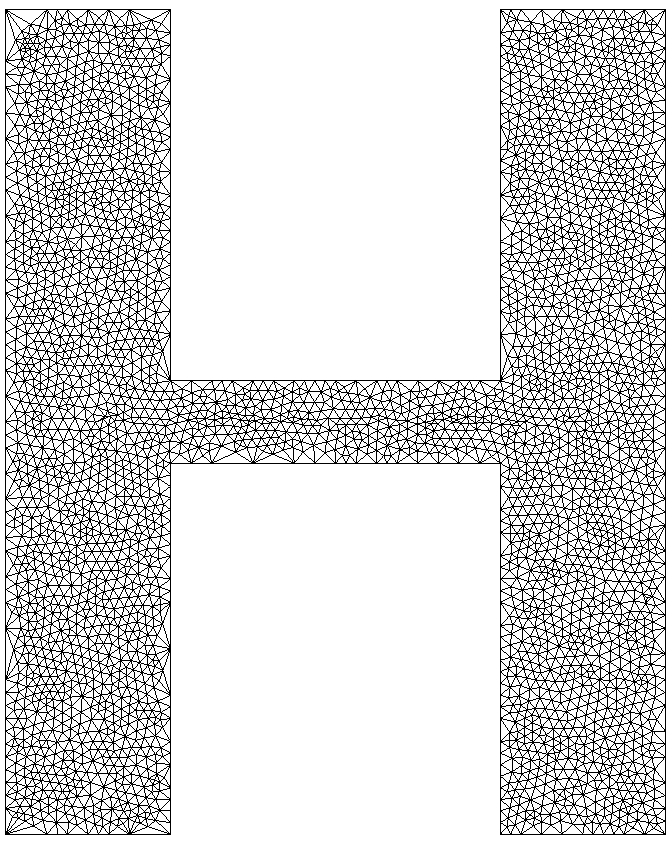}\\
\includegraphics[width=0.3\textwidth]{./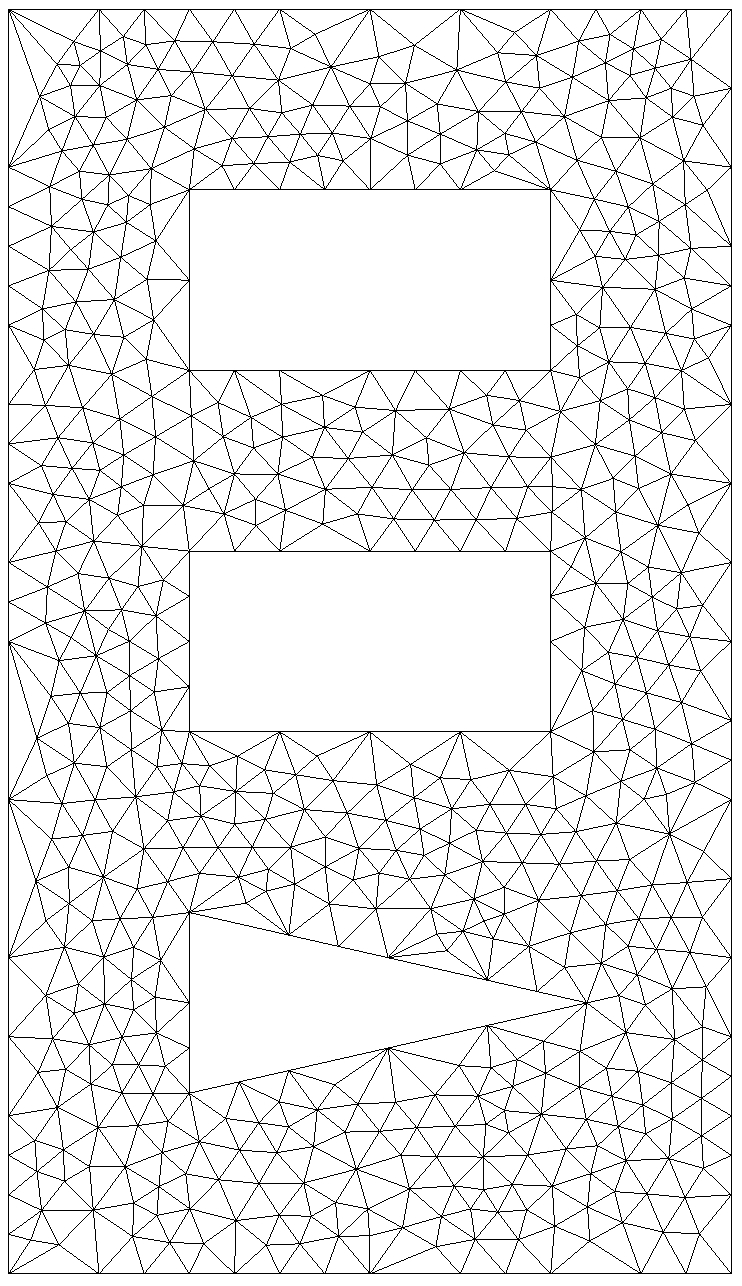} & \includegraphics[width=0.3\textwidth]{./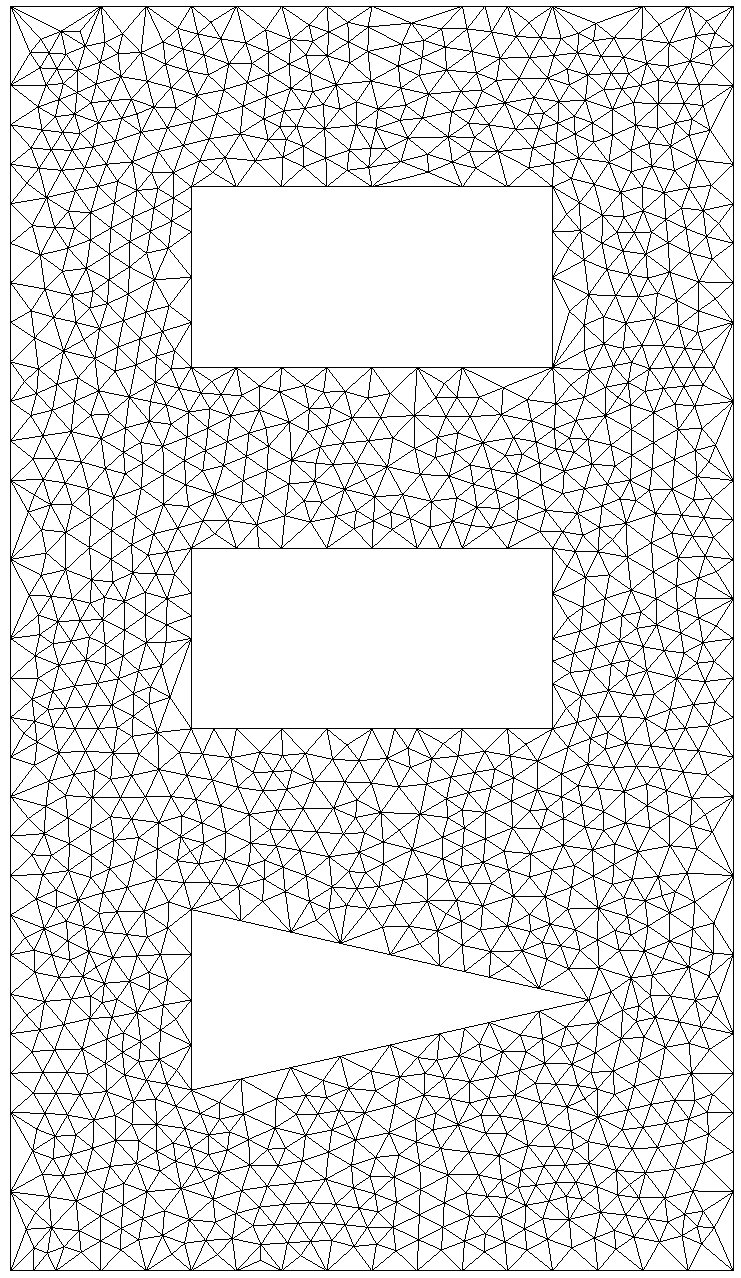} & \includegraphics[width=0.3\textwidth]{./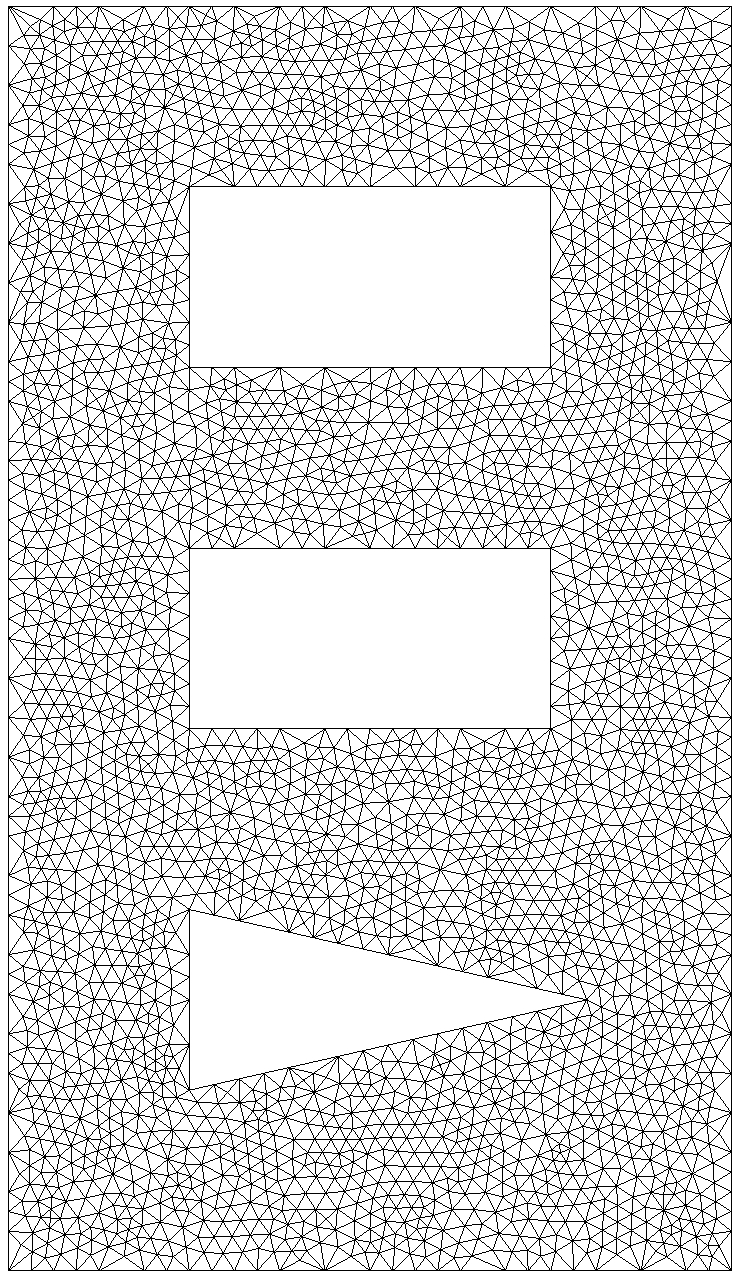}
\end{tabular}
\end{center}
\vspace{-4mm}
\caption{The first row shows H model with 756/1476/3584 vertices. 
The second row shows Hole3 model with 660/1468/3136 vertices. 
}
\label{fig:H_hole3}
\end{figure}


\begin{figure}[!t]
\begin{center}
\begin{tabular}{c}
\includegraphics[width=0.7\textwidth]{./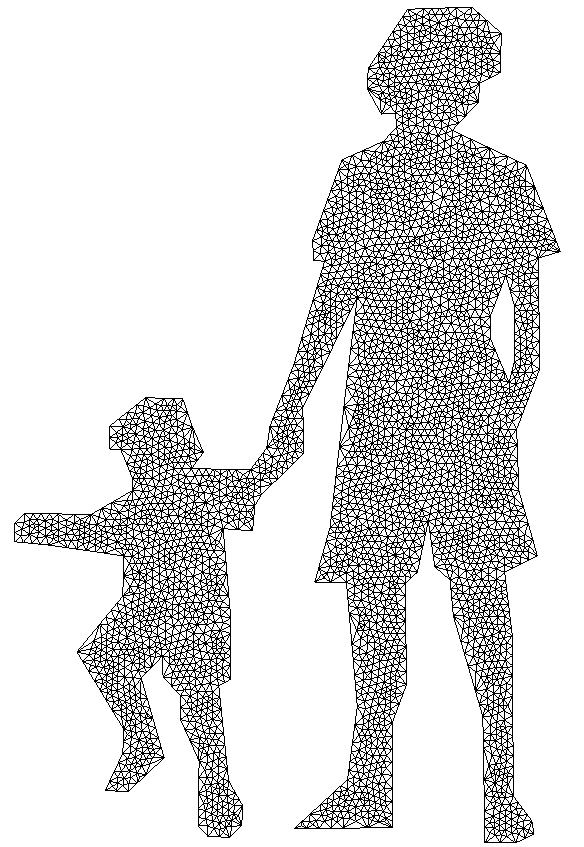}\\ 
\end{tabular}
\end{center}
\vspace{-4mm}
\caption{A mesh with 4007 vertices. 
}
\label{fig:family}
\end{figure}

\begin{figure}[!t]
\begin{center}
\begin{tabular}{c}
\includegraphics[width=0.95\textwidth]{./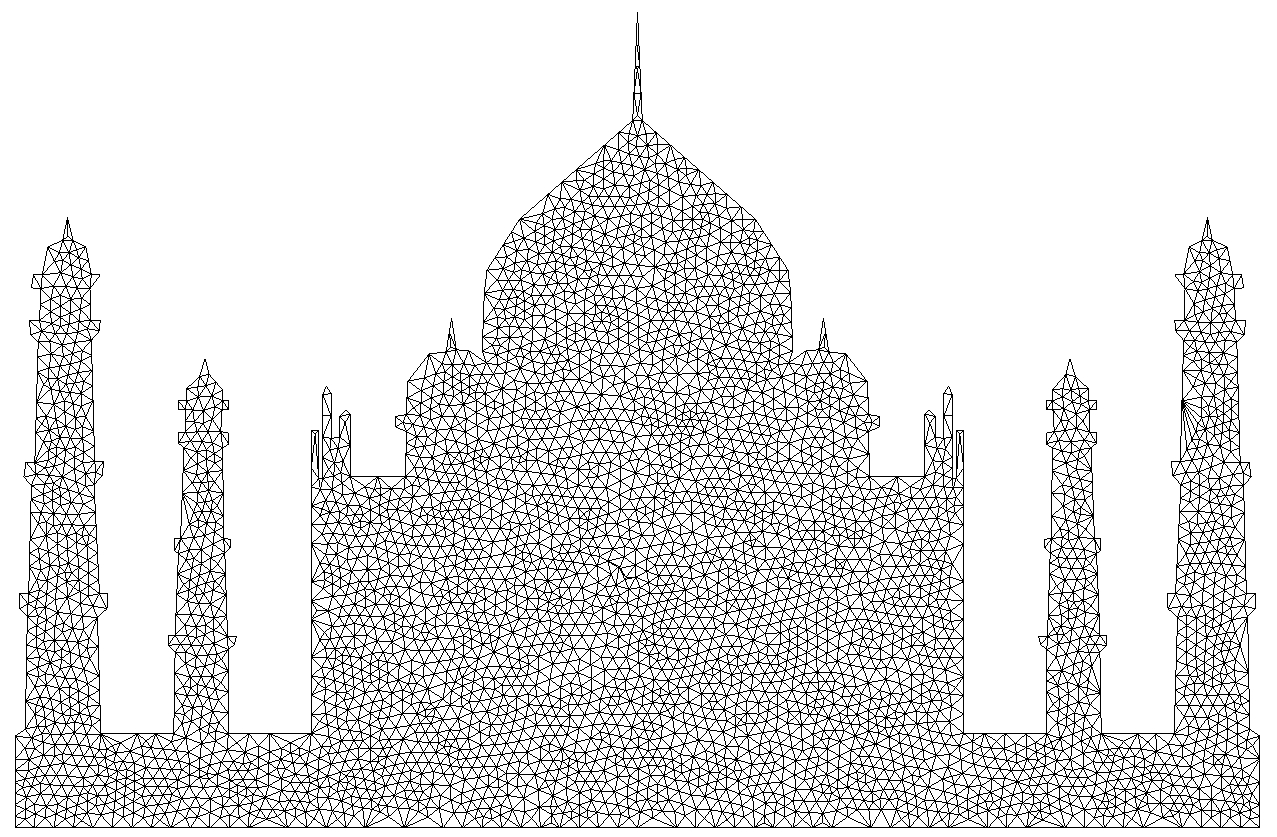}
\end{tabular}
\end{center}
\vspace{-4mm}
\caption{ A mesh with 4459 vertices. 
}
\label{fig:taj}
\end{figure}

\section{Discussion}
\label{sec:conclusion}
In this paper, we have proposed a new variational principle for improving 2D triangle meshes
based on hyperbolic volume, devised an efficient algorithm to maximize the 
energy functional over nonlinear constraints and to improve the quality of meshes, 
and applied our algorithm to various datasets and compared its performance to CVT. 
Here we point out a couple of possible directions for future work. First, notice that
the combinatorial structures of input meshes are fixed in the current framework.
However, one can also make them Delaunay \footnote{The sum of the opposite angles is 
less than $\pi$ for each interior edge} by flipping edges to improve the quality of meshes. 
It is interesting to see how the energy functional changes while flipping edges. 
Second, notice that the energy functional can be extended to higher dimensional spaces 
by considering higher dimensional ideal simplex. Thus one can follow the similar framework 
to improve higher dimensional meshes.   

\section{Acknowledgments}
The authors acknowledge The National Basic Research Program of China (973 Program 2012CB825501); Tsinghua National Laboratory for Information Science 
and Technology（TNList）Cross-discipline Foundation. 

%
 \bibliographystyle{plain}
 \bibliography{ref}
%


\printindex
\end{document}